\newtheorem{observation}[theorem]{Observation}
\newcommand{\norm}[1]{\left\Vert#1\right\Vert}
\newcommand{\Real}{\mathbb R}
\newcommand{\eps}{\varepsilon}
\newcommand{\ball}{\mathcal{B}}
\newcommand{\Frechet}{Fr\'echet\xspace}
\newcommand{\ac}{separation corner\xspace}
\newcommand{\acs}{separation corners\xspace}
\newcommand{\cyl}{\mathcal{C}}
\newcommand{\pl}{\mathrm{left}}
\newcommand{\pr}{\mathrm{right}}
\newcommand{\tikzsetnextfilename}[1] { }
\newcommand{\eval}[0]  { \pgfmathtruncatemacro }
\renewcommand{\todo}[2][]{\tikzexternaldisable\@todo[#1]{#2}\tikzexternalenable}
\newcommand{\getxy}[3]{%
  \tikz@scan@one@point\pgfutil@firstofone#1\relax
  \edef#2{\the\pgf@x}%
  \edef#3{\the\pgf@y}%
}
		\pgfmathsetmacro \pgfmathx {abs(#1)}%
		\pgfmathsetmacro \pgfmathy {abs(#2)}%
\newcommand{\ceps}{0.2cm}
\newcommand{\tikzdefines}[0] {
    \tikzstyle{cblack}=[circle, draw, thick, solid, fill=black, scale=.15]
    \tikzstyle{cblue}=[circle, draw, solid, fill=cyan!20, scale=.4]
    \tikzstyle{cred}=[red, circle, draw, solid, fill=red!20, scale=.4]
    \tikzstyle{cgreen}=[circle, draw, solid, fill=green!60, scale=.4]
    \tikzstyle{cdotted}=[circle, draw, thick, dotted, fill=none, scale=1*\ceps]
    \tikzstyle{lblue}=[double distance=.1cm,line cap=round,double=cyan!20]
}
\newcommand{\xcorner}[5] {
    \coordinate (a1) at ($(#3)!2*\ceps!270:(#2)$);
    \coordinate (a2) at ($(a1) + ($(0,0)!\ceps!($(#3)-(#2)$)$)$);
    \coordinate (a3) at ($(#4)!2*\ceps!90:(#5)$);
    \coordinate (a4) at ($(a3)+($(0,0)!\ceps!($(#4)-(#5)$)$)$);
    \coordinate (#1_b) at (intersection of a1--a2 and a3--a4);
    \coordinate (a1) at ($(#1_b)+($(0,0)!\ceps!($(#3)-(#2)$)$)$);
    \coordinate (a2) at ($(#1_b)+($(0,0)!\ceps!($(#5)-(#4)$)$)$);
    \coordinate (#1_a) at (intersection of #2--#3 and a2--#1_b);
    \coordinate (#1_c) at (intersection of #4--#5 and a1--#1_b);
    \coordinate (#1_d) at ($(#1_a)!\ceps!270:(#1_b)$);
    \coordinate (#1_e) at ($(#1_c)!\ceps!90:(#1_b)$);

    \node[cblack] () at (#1_a) {};
    \node[cblack] () at (#1_b) {};
    \node[cblack] () at (#1_c) {};
}
\newcommand{\acorner}[5] {
    \xcorner{#1}{#2}{#3}{#4}{#5}
    \node[cblue] (#1_nd) at (#1_d) {};
}
\newcommand{\bcorner}[5] {
    \xcorner{#1}{#2}{#3}{#4}{#5}
    \node[cblue] (#1_ne) at (#1_e) {};
}
\newcommand{\ucorner}[5] {
    \xcorner{#1}{#2}{#3}{#4}{#5}
    \node[cblue] (#1_ne) at (#1_e) {};
    \node[cblue] (#1_nd) at (#1_d) {};
}
\newcommand{\pathborder}[2]{
    [
        create hullcoords/.code={
            \global\edef\namelist{#1}
            \foreach [count=\counter] \nodename in \namelist {
                \global\edef\numberofnodes{\counter}
                \coordinate (hullcoord\counter) at (\nodename);
            }
            \pgfmathtruncatemacro \numberofnodes {\numberofnodes - 1}
            \foreach [count=\counter] \nodenum in {\numberofnodes,...,1} {
                \pgfmathtruncatemacro \next {\counter + \numberofnodes + 1}
                \coordinate (hullcoord\next) at (hullcoord\nodenum);
            }
            \pgfmathtruncatemacro \numberofnodes {\numberofnodes * 2 + 1}
            \coordinate (hullcoord0) at (hullcoord2);
            \pgfmathtruncatemacro \lastnumber {\numberofnodes+1}
            \coordinate (hullcoord\lastnumber) at (hullcoord2);
        },
        create hullcoords
    ]
    ($(hullcoord1)!#2!90:(hullcoord0)$) node[circle] {}
    \foreach [
        evaluate=\currentnode as \prevnode using \currentnode-1,
        evaluate=\currentnode as \nextnode using \currentnode+1
    ] \currentnode in {2,...,\numberofnodes} {
        let
            \p1 = ($(hullcoord\currentnode) - (hullcoord\prevnode)$),
            \n1 = {atan3(\x1,\y1) + 90pt},
            \p2 = ($(hullcoord\nextnode) - (hullcoord\currentnode)$),
            \n2 = {atan3(\x2,\y2) + 90pt},
            \n3 = {Mod(\n2-\n1,360) - 360},
            \n4 = {cos(.5*(\n3+360))},
            \n5 = {\n1+.5*\n3}
        in
        {
            \ifdim \n3 < -180.05pt {
                -- ($(hullcoord\currentnode) + (\n5:-1/\n4*#2)$)
            } \else {
                -- ($(hullcoord\currentnode)!#2!-90:(hullcoord\prevnode)$)
                arc [start angle=\n1, delta angle=\n3, radius=#2]
            } \fi
        }
    }
}
\newcommand\latinabbrev[1]{
  \peek_meaning:NTF . {% Same as \@ifnextchar
    #1\@}%
  { \peek_catcode:NTF a {% Check whether next char has same catcode as \'a, i.e., is a letter
      #1.\@ }%
    {#1.\@}}}
\title{Matching Curves to Imprecise Point Sets\\using \Frechet Distance}
\institute{
Dept. of Computer \& Info. Sci. \& Eng.\\
University of Florida, Gainesville, FL, USA\\
{\tt \{accisano, ungor\}@cise.ufl.edu}
}
\author{Paul Accisano \and Alper {\"{U}ng\"{o}r}}
\begin{document}

\maketitle

\begin{abstract}
    Let $P$ be a polygonal curve in $\mathbb{R}^d$ of length $n$, and $S$ be a point-set of size $k$.  The Curve/Point Set Matching problem consists of finding a polygonal curve $Q$ on $S$ such that the \Frechet distance from $P$ is less than a given $\eps$.  We consider eight variations of the problem based on the distance metric used and the omittability or repeatability of the points.  We provide closure to a recent series of complexity results for the case where $S$ consists of precise points.  More importantly, we formulate a more realistic version of the problem that takes into account measurement errors.  This new problem is posed as the matching of a given curve to a set of imprecise points.  We prove that all three variations of the problem that are in P when $S$ consists of precise points become NP-complete when $S$ consists of imprecise points.  We also discuss approximation results.
    
\end{abstract}

\section{Introduction}
Matching a curve and a set of points is a typical geometric problem that arises in many science and engineering fields, such as computer aided design, computer graphics and vision, and protein structure prediction \cite{Brakatsoulas05,Jiang08}. In these applications, data is typically gathered as a point set through a scanner, and the goal is often to find certain objects, described as polygonal curves, in the scene.  In order to perform a matching, one needs a similarity metric between geometric constructs.

In this paper, we study the problem of curve and point set matching, using the \Frechet distance as the similarity metric.  Given a point set and a polygonal curve, the goal is to connect the points into a new polygonal curve that is similar to the given curve.  Formally, given a polygonal curve $P$ of length $n$, a point set $S$ of size $k$, and a real number $\eps > 0$, determine whether there exists a polygonal curve $Q$ on a subset of the points of $S$ such that $\delta_F(P, Q) \le \eps$.

The version of this problem in which the points to be matched are precise has been well studied in the literature \cite{Accisano12,Maheshwari11,Wylie12}, and we refer to it in this paper as the \textbf{\emph{Curve/Point Set Matching (CPSM)}} problem.  However, the limitations of modern scanner technology suggest that a more realistic version of this problem would be to consider the input points as imprecise regions.  In this paper, we introduce this new version of the problem and refer to it as the \textbf{\emph{Curve/Imprecise Point Set Matching (CIPSM)}} problem.

\begin{table}[h]
    \centering
    \begin{tabular}{l@{\hspace{1em}}l@{\hspace{1em}}|@{\hspace{3em}}cl@{\hspace{3em}}c@{\hspace{-1em}}c}
        \toprule
                &               &   \multicolumn{2}{l}{Discrete} & Continuous\\
        \midrule
        Subset  & Unique        & NP-C & \cite{Wylie12}     & \textbf{NP-C}  \\
                & Non-Unique    & P    & \cite{Wylie12}     & P & \cite{Maheshwari11} \\
        All-Points & Unique        & NP-C & \cite{Wylie12}     & NP-C & \cite{Accisano12}\\
                & Non-Unique    & P    & \cite{Wylie12}     & NP-C & \cite{Accisano12} \\
        \bottomrule
    \end{tabular}%
    \bigskip
    \caption{Eight versions of the CPSM problem and their complexity classes. \vspace{-10pt}}
    \label{tab:results}%
\end{table}%

Eight versions of the original CPSM problem can be classified based on whether the use of all points is enforced, whether points are allowed to be visited more than once, and whether the \Frechet distance metric used is discrete or continuous.  Table \ref{tab:results} summarizes the versions and their complexity classes.

\textbf{Our results.} At an earlier workshop, we presented preliminary results showing that the CPSM problem is NP-complete when coverage of all points is enforced \cite{Accisano12}.  Here we extend this work by also proving the last remaining open question in Table \ref{tab:results}, the Unique Subset version (bold), is also NP-complete (Section \ref{sec:precise}).  More importantly, to the best of our knowledge, this stands as the first paper to formulate and present complexity results on matching a curve to imprecise points using \Frechet distance (CIPSM).  Naturally, all the versions shown in Table \ref{tab:results} that are NP-complete are also NP-complete for their imprecise variations.  However, we show that the other three versions are also NP-complete, using nontrivial reductions (Sections \ref{sec:imprecise}, \ref{sec:dimprecise}).  Two of the three are shown to remain NP-complete even if the given curve is simple. Complexity results on the CIPSM problem are summarized in Table \ref{tab:impresults} in section \ref{sec:imprecise}.  Finally, we briefly present some simple approximation results.

\section{Previous Work}
The basic \Frechet distance problem asks, given two geometric objects and a real number $\eps > 0$, is the \Frechet distance $\delta_F$ between the two objects less than $\eps$?  It was shown by Alt and Godau \cite{Alt95} that, when the objects in question are polygonal curves of length $n$ and $m$, this problem can be solved in $O(nm)$ time.  They also showed that finding the exact \Frechet distance between the two curves can be done in $O(nm \log(nm))$ time.

Alt et al.\ \cite{Alt03} examined the following variant of the \Frechet distance problem.  Given a polygonal curve $P$ of length $n$, a graph $G$ of complexity $m$, and a real number $\eps > 0$, determine whether there exists a path in $G$ that stays within \Frechet distance $\eps$ of $P$.  They presented an algorithm that decides this problem in time $O(nm\log(m))$.  When the input graph is a clique, their problem becomes the Continuous Non-unique Subset version of the CPSM problem.  Maheshwari et al.\ presented an algorithm in \cite{Maheshwari11} that decides this problem in time $O(nk^2)$, improving on the result in \cite{Alt03} by a log factor.  They also showed that the curve of minimal \Frechet distance can be computed in time $O(nk^2\log(nk))$ using parametric search.

Wylie and Zhu \cite{Wylie12} also explored the CPSM problem from the perspective of \emph{discrete} \Frechet distance, which only takes into account the distance at the vertices along the curves.  They showed that the non-unique versions were solvable in $O(nk)$ time, and the unique versions were NP-complete, as listed in Table \ref{tab:results}.

A typical scenario in geometric applications occurs when there exist measurement errors or finite precision computations.  In such cases, it makes sense to integrate data imprecision into the formulation of the geometric problem \cite{Ahn12b, Loffler09}.  Related to our work, Ahn et al.\ \cite{Ahn12b} recently studied discrete \Frechet distance between two polygonal chains with imprecise vertices.  Even when we limit ourselves to discrete \Frechet distance, this differs from our problem where only one curve is given.  The CIPSM problem turns out to be hard, while their version of the problem admits a polynomial time solution.

\section{Preliminaries} \label{sect:hardness_prelims}
Below, we present the notation that will be used throughout the paper, some of which is similar to the notation used by earlier work \cite{Accisano12, Alt03,Maheshwari11}.  More will be introduced later as needed.  Given two curves $P, Q : [0, 1] \rightarrow \mathbb{R}^d$, the \emph{\Frechet distance} between $P$ and $Q$ is defined as $\delta_F(P, Q) = \inf_{\sigma, \tau} \max_{t\in[0,1]} \norm{P(\sigma(t)), Q(\tau(t))}$, where $\sigma, \tau : [0, 1] \rightarrow [0, 1]$ range over all continuous non-decreasing surjective functions \cite{Ewing85}.

Let the continuous function $P : [0, 1] \rightarrow \mathbb{R}^d$ represent a curve in $\mathbb{R}^d$.  Given two points $u, v \in P$, we use the notation $u \prec v$ if $u$ occurs before $v$ on a traversal of $P$.  The relation $\succ$ is defined analogously.  For a subcurve $R \subseteq P$, we denote the first and last point of $R$ along $P$ as left$(R)$ and right$(R)$, respectively.

For a given point $p \in \mathbb{R}^d$ and a real number $\eps > 0$, let $\ball(p, \eps) \equiv \{q \in \mathbb{R}^d : \norm{pq} \le \eps$ denote the \emph{ball} of radius $\eps$ centered at $p$, where $\norm{\cdot}$ denotes Euclidean distance.  Given a line segment $L \subset \mathbb{R}^d$, let $\mathcal{C}(L, \eps) \equiv \bigcup_{p \in L} \mathcal{B}(p, \eps)$ denote the \emph{cylinder} of radius $\eps$ around $L$.  Note that a necessary condition for two polygonal curves $P$ and $Q$ to have \Frechet distance less than $\eps$ is that the vertices of $Q$ must all lie within the cylinder of some segment of $P$.

\section{Continuous Unique Subset CPSM Complexity} \label{sec:precise}
We now show that the Continuous Unique Subset version of the CPSM problem is NP-complete.  Our reduction is from the (3,B2)-SAT problem, a variant of the famous 3-SAT problem in which each literal, positive and negative, is restricted to occur exactly twice.  This variant was shown to be NP-complete in \cite{Berman03}.

Let $\Phi$ be a formula given as input to the (3,B2)-SAT problem.  We construct a polygonal curve $P$ and a point set $S$ such that $\Phi$ is satisfiable if and only if there exists a vertex-unique polygonal curve $Q$ with \Frechet distance at most $\eps$ from $P$.  A curve is \emph{vertex-unique} if it has no shared vertices.

Our reduction makes use of a gadget we call \emph{\acs}, a special version of which was introduced in \cite{Accisano12}.  These corner constructs force a choice between two possible paths in $S$, allowing the effects of binary choice to be propagated to other parts of the construction.  (Figure \ref{fig:switch} Left)

%\begin{figure*}[ht]
%    \begin{subfigure}[t]{0.45\textwidth}
%	    \centering
%        \input{fig_separation1.tex}
%        \label{fig:separation1}
%    \end{subfigure}
%    \hfill
%    \begin{subfigure}[t]{0.45\textwidth}
%    	\centering
%        \input{fig_separation2.tex}
%        \label{fig:separation2}
%    \end{subfigure}
%    \{The two dashed curves are the only possible curves on $S$ with \Frechet distance less than 1 from the given (solid) curve.}
%    \label{fig:separation}
%\end{figure*}

\begin{figure}[t]
    \begin{subfigure}[t]{0.45\textwidth}
        \tikzsetnextfilename{switch1}
\begin{tikzpicture}[scale=.3]
    \tikzdefines

    \foreach [count=\x] \pt in {(0,5), (0,2), (2,2), (2,0), (14,0), (14,2), (16, 2), (16,5)}
    	\node[cblack] (p\x) at \pt {};

    \draw ($(p1)$) \foreach \x in {2,...,8} {--($(p\x)$)};
    \draw[dotted] \pathborder{p1,p2,p3,p4,p5,p6,p7,p8}{1cm};

    \node[cblue](s1) at (1,5) {};
    \node[cblue](s2) at (0,1) {};
    \node[cblue](s3) at (1,0) {};
    \node[cblue](s4) at (15,0) {};
    \node[cblue](s5) at (16,1) {};
    \node[cblue](s6) at (15,5) {};

    \draw[thick, red, dashed] (s1) -- (s2) -- (s5) -- (s6);
    \draw[thick, red, dashed] (s1) -- (s3) -- (s4) -- (s6);

\end{tikzpicture}
        %\bigskip
	\label{fig:switch1}
    \end{subfigure}
    \hfill
    \begin{subfigure}[t]{0.45\textwidth}
        \tikzsetnextfilename{switch2}
\begin{tikzpicture}[scale=.3]
    \tikzdefines

    \foreach [count=\x] \pt in {(0,5), (0,2), (2,2), (2,0), (14,0), (14,2), (16, 2), (16,5)}
    	\node[cblack] (p\x) at \pt {};

    \draw ($(p1)$) \foreach \x in {2,...,8} {--($(p\x)$)};
    \draw[dotted] \pathborder{p1,p2,p3,p4,p5,p6,p7,p8}{1cm};

    \node[cblue](s1) at (1,5) {};
    \node[cblue](s2) at (0,1) {};
    \node[cblue](s3) at (1,0) {};
    \node[cblue](s4) at (15,0) {};
    \node[cblue](s5) at (16,1) {};
    \node[cblue](s6) at (15,5) {};

    \node[cblue] (i1) at ($(s2)!.5!(s5)$) {};

    \draw[thick, red] (s1) -- (s2) -- (i1) -- (s4) -- (s6);

\end{tikzpicture}
        \label{fig:switch2}
    \end{subfigure}
    \caption{Normally, once the curve starts down one path, changing to the other is impossible.  An extra point on the cylinder boundary allows the curve to start on one path and switch to the other midway. }
     \label{fig:switch}
\end{figure}
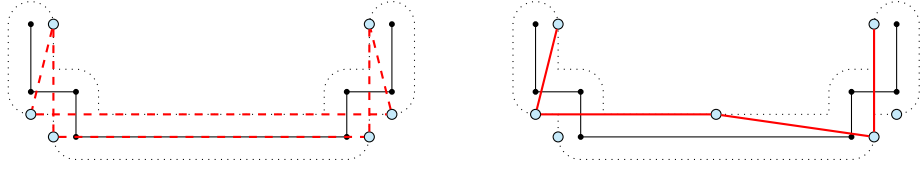

We will first create a series of small chains, each consisting of two \acs, laid out horizontally.  These chains will represent the variables of $\Phi$, and the four corner points used in the \acs will represent the four literal instances of the variable.  Then, we will create a \ac \emph{loop} for each clause.  However, instead of allowing both possible paths, we will force one of the two to be chosen.  At the end of the loop, we will force the chosen path to terminate in a dead end.  The loop will be arranged so that the literal points corresponding to the literals used in the clause provide an opportunity for the curve to ``change tracks'' and avoid the dead end.  Since points cannot be used more than once, a literal point will only be available for use to change tracks if it was not already used in the initial variable assignment path.  Thus, there exists a path that can traverse the entire curve if and only if $\Phi$ has a satisfying assignment.

Extra points on the cylinder boundaries are necessary for the curve to switch between paths.  (Figure \ref{fig:switch} Right).  Ordinarily, there exist only two path possibilities, and once the first corner point is decided, the curve is fully determined until the end of the loop.  However, an extra point on the cylinder boundary allows the curve to change tracks to the other path possibility.  We will use this property when constructing the clause section of the construction.

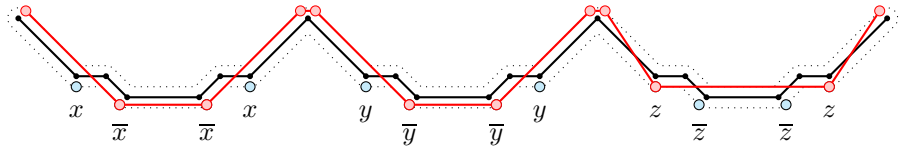
\begin{figure}[b]
    \centering
    \makebox[\textwidth][c]{
    \tikzsetnextfilename{variables}
\begin{tikzpicture}[scale=.7]
    \tikzdefines

    \node[cblack] (a0_p) at (-0.5,1.5) {};
    \foreach [count=\y] \x in {0,...,2} {
        \ucorner{a\x}{0+5.5*\x,1}{1+5.5*\x,0}{1+5.5*\x,0}{2+5.5*\x,0};
        \ucorner{b\x}{1+5.5*\x,0}{2+5.5*\x,0}{3.5+5.5*\x,0}{4.5+5.5*\x,1};
        \node[cblack] (a\y_p) at (5+5.5*\x,1.5) {};

        \node[cred] (a\x_q) at ($(a\x_p)!\ceps!90:(a\x_a)$) {};
        \node[cred] (b\x_q) at ($(a\y_p)!\ceps!270:(b\x_c)$) {};
    };

    \foreach [count=\y] \x in {0,...,2} {
            \draw[thick] (a\x_p)--(a\x_a)--(a\x_b)--(a\x_c)--(b\x_a)--(b\x_b)--(b\x_c)--(a\y_p);
    }

%    \node () at ($($(a0_b)!.5!(b0_b)$)+(0,.5)$) {$x$};
%    \node () at ($($(a1_b)!.5!(b1_b)$)+(0,.5)$) {$y$};
%    \node () at ($($(a2_b)!.5!(b2_b)$)+(0,.5)$) {$z$};

    \node[label={below:$x$}] () at (a0_d) {};
    \node[label={below:$x$}] () at (b0_e) {};
    \node[label={below:$\overline{x}$}] () at (a0_e) {};
    \node[label={below:$\overline{x}$}] () at (b0_d) {};
    \node[label={below:$y$}] () at (a1_d) {};
    \node[label={below:$y$}] () at (b1_e) {};
    \node[label={below:$\overline{y}$}] () at (a1_e) {};
    \node[label={below:$\overline{y}$}] () at (b1_d) {};
    \node[label={below:$z$}] () at (a2_d) {};
    \node[label={below:$z$}] () at (b2_e) {};
    \node[label={below:$\overline{z}$}] () at (a2_e) {};
    \node[label={below:$\overline{z}$}] () at (b2_d) {};

    \draw[thick, red] (a0_q)
                        --(a0_e) node[cred, thin]{}
                        --(b0_d) node[cred, thin]{}
                        --(b0_q) --(a1_q)
                        --(a1_e) node[cred, thin]{}
                        --(b1_d) node[cred, thin]{}
                        --(b1_q) --(a2_q)
                        --(a2_d) node[cred, thin]{}
                        --(b2_e) node[cred, thin]{}
                        --(b2_q);

    \begin{pgfonlayer}{background}
        \draw[black, dotted] \pathborder{a0_p,a0_a,a0_b,a0_c,b0_a,b0_b,b0_c,a1_p,a1_a,a1_b,a1_c,b1_a,b1_b,b1_c,a2_p,a2_a,a2_b,a2_c,b2_a,b2_b,b2_c,a3_p}{\ceps};
    \end{pgfonlayer}

\end{tikzpicture} 
    }
    \caption{The variable section of the construction for a formula with three variables.  The curve corresponding to the assignment {\sc true}, {\sc true}, {\sc false} is shown.}
    \label{fig:variables}
\end{figure}

\subsection{Construction: Variable Section}
The construction is composed of two sections: one for the variables and one for the clauses.  We begin with the variable section.  The construction starts with a set of \acs for each variable laid out as in Figure \ref{fig:variables}, creating two path alternatives for each variable.  The two path possibilities will correspond to {\sc true} or {\sc false} assignments to that variable.  Note that in order to traverse this part of the construction, either the inner or outer corner points \emph{must} be visited.  We refer to these points as literal points, as they will represent the literals of $\Phi$.  The outer corner points of each variable construct will be referred to as the positive-points, and the inner corner points will be the negative-points.

The purpose of the variable section is to ``use up'' the literal points corresponding to whichever {\sc true}/{\sc false} value is \emph{not} assigned to the variable, leaving the points corresponding to the actual variable value for later use by the clause section.  Figure \ref{fig:variables} shows how an assignment to the variables of $\Phi$ maps to a traversal of the variable section in the construction.  Variables assigned to {\sc true} take the inner path, leaving the outer points available for use later, while variables assigned to {\sc false} take the outer path, leaving the inner points available.

\begin{figure}[b]
    \centering
    \makebox[\textwidth][c]{
    \tikzsetnextfilename{clause}
\begin{tikzpicture}[scale=.7]
    \tikzdefines

    \node[cblack] (a0_p) at (-0.5,1.5) {};
    \foreach [count=\y] \x in {0,...,2} {
        \ucorner{a\x}{0+5.5*\x,1}{1+5.5*\x,0}{1+5.5*\x,0}{2+5.5*\x,0};
        \ucorner{b\x}{1+5.5*\x,0}{2+5.5*\x,0}{3.5+5.5*\x,0}{4.5+5.5*\x,1};
        \node[cblack] (a\y_p) at (5+5.5*\x,1.5) {};

        \node[cred] (a\x_q) at ($(a\x_p)!\ceps!90:(a\x_a)$) {};
        \node[cred] (b\x_q) at ($(a\y_p)!\ceps!270:(b\x_c)$) {};
    };

    \coordinate (x) at ($(a3_p)+(0,1)$);
    \acorner{c1}{a3_p}{x}{1,3}{0,3};

    \node () at ($($(a0_b)!.5!(b0_b)$)+(0,.5)$) {$x$};
    \node () at ($($(a1_b)!.5!(b1_b)$)+(0,.5)$) {$y$};
    \node () at ($($(a2_b)!.5!(b2_b)$)+(0,.5)$) {$z$};

    \foreach [count=\x] \var in {a0_e, b2_e, a1_e} {
        \eval \xa {\x*2-1}
        \eval \xb {\x*2}
        \eval \xc {\x*2+1}
        \eval \dir{pow(-1,\x)}
        \eval \m  {mod(\xc,12)}
        \pgfmathsetmacro \ha {1.75 + floor((\x+1)/2)}
        \pgfmathsetmacro \hb {1.25 + \ha * \dir}

        \coordinate (xd) at ($(c\xa_c)-(\dir,0)$);
        \getxy{($(\var)+(\ceps*\dir, 0)$)}{\cx}{\cy};
        \ucorner{c\xb}{xd}{c\xa_c}{\cx,0}{\cx,\dir};

        \coordinate (xd) at ($(c\xb_c)-(0,\dir)$);
    	\ifnum\m=1
            \acorner{c\xc}{xd}{c\xb_c}{\dir,\hb}{0,\hb};
    	\else
    		\ifnum\m=7
                \bcorner{c\xc}{xd}{c\xb_c}{\dir,\hb}{0,\hb};
    		\else
                \ucorner{c\xc}{xd}{c\xb_c}{\dir,\hb}{0,\hb};
    		\fi
    	\fi
    };

    \node[cblack] (c8_a) at ($(c7_c)+(3,0)$) {};
    \node[cblue] (c8_q) at ($(c8_a)+(0,\ceps)$) {};

    \foreach [count=\y] \x in {0,...,2} {
            \draw[thick] (a\x_p)--(a\x_a)--(a\x_b)--(a\x_c)--(b\x_a)--(b\x_b)--(b\x_c)--(a\y_p);
    }
    \draw[thick] (a3_p)--(c1_a);
    \foreach [count=\y from 2] \x in {1,...,7} {
            \draw[thick] (c\x_a)--(c\x_b)--(c\x_c)--(c\y_a);
    }

    \draw[thick, red] (a0_q)
                        --(a0_e) node[cred, thin]{}
                        --(b0_d) node[cred, thin]{}
                        --(b0_q) --(a1_q)
                        --(a1_e) node[cred, thin]{}
                        --(b1_d) node[cred, thin]{}
                        --(b1_q) --(a2_q)
                        --(a2_d) node[cred, thin]{}
                        --(b2_e) node[cred, thin]{}
                        --(b2_q)
                        --(c1_d) node[cred, thin]{}
                        --(c2_e) node[cred, thin]{}
                        --(c3_d) node[cred, thin]{}
                        --(c4_e) node[cred, thin]{}
                        --(c5_d) node[cred, thin]{}
                        --(c6_e) node[cred, thin]{}
                        ;
    \begin{pgfonlayer}{background}
        \draw[thick, red, dashed] (a1_q)
                        --(a1_d)
                        --(b1_e)
                        --(b1_q);

        \draw[thick, red, dashed] (c6_e) -- (a1_e) -- (c7_e) -- (c8_q);

        \draw[black, dotted] \pathborder{a0_p,a0_a,a0_b,a0_c,b0_a,b0_b,b0_c,a1_p,a1_a,a1_b,a1_c,b1_a,b1_b,b1_c,a2_p,a2_a,a2_b,a2_c,b2_a,b2_b,b2_c,a3_p,    c1_a,c1_b,c1_c,c2_a,c2_b,c2_c,c3_a,c3_b,c3_c,c4_a,c4_b,c4_c,c5_a,c5_b,c5_c,c6_a,c6_b,c6_c,c7_a,c7_b,c7_c,c8_a}{\ceps};
    \end{pgfonlayer}

\end{tikzpicture} 
    }
    \caption{A clause loop for the clause $(\overline{x} \vee \overline{y} \vee z)$.}
    \label{fig:clause}
\end{figure}

\subsection{Construction: Clause Section}
We next create the clause section of the construction, appending it to the variable section.  We begin by adding a \ac loop.  However, we leave out one of the two corner points in the first \ac.  This will force the curve to pick a specific possibility and remove the option to pick the other.  Next, we place more \acs, arranging the loop so that the three literal points corresponding to the clause's literals are exactly on the cylinder boundaries of the segments.  Once this is done, we remove another point from the next \ac in the loop corresponding to the path that was forced earlier, creating a dead end.  The only way to proceed will be to use one of the literal points to change tracks before the dead end is reached.  If no literal point is available for use, then the clause is not satisfied and there will be no way for the curve to proceed while maintaining the appropriate \Frechet distance.

Figure \ref{fig:clause} demonstrates a single clause loop.  Note that the first and last \acs of the clause loop are missing a corner point.  Since the corresponding literal points are already used, there is no place to switch, and the solid curve cannot continue because of the missing point at the end.  However, if the value of variable $y$ is changed from {\sc true} to {\sc false}, the corresponding literal point is free to be used by the clause loop and escape the dead end.  The dashed curve shows this configuration.

This process is then repeated for every clause, with a dead end \ac between each clause loop.  The full construction is therefore only traversable if every clause loop has a point at which it can switch tracks, which corresponds to a satisfying assignment.  Figure \ref{fig:unique} shows a completed construction.

\begin{figure}[h]
    \centering
    \makebox[\textwidth][c]{
    \tikzsetnextfilename{unique}
\begin{tikzpicture}[scale=.7]
    \tikzdefines

    \node[cblack] (a0_p) at (-0.5,1.5) {};
    \foreach [count=\y] \x in {0,...,2} {
        \ucorner{a\x}{0+5.5*\x,1}{1+5.5*\x,0}{1+5.5*\x,0}{2+5.5*\x,0};
        \ucorner{b\x}{1+5.5*\x,0}{2+5.5*\x,0}{3.5+5.5*\x,0}{4.5+5.5*\x,1};
        \node[cblack] (a\y_p) at (5+5.5*\x,1.5) {};

        \node[cblue] (a\x_q) at ($(a\x_p)!\ceps!90:(a\x_a)$) {};
        \node[cblue] (b\x_q) at ($(a\y_p)!\ceps!270:(b\x_c)$) {};
    };

    \coordinate (x) at ($(a3_p)+(0,1)$);
    \acorner{c1}{a3_p}{x}{1,2.8}{0,2.8};

    \node () at ($($(a0_b)!.5!(b0_b)$)+(0,.5)$) {$x$};
    \node () at ($($(a1_b)!.5!(b1_b)$)+(0,.5)$) {$y$};
    \node () at ($($(a2_b)!.5!(b2_b)$)+(0,.5)$) {$z$};

    \foreach [count=\x] \var in {a0_d, b2_e, a1_d,   a2_e, a0_e, b1_e,   b0_d, a2_d, a1_e,   b2_d, b0_e, b1_d} {
        \eval \xa {\x*2-1}
        \eval \xb {\x*2}
        \eval \xc {\x*2+1}
        \eval \dir{pow(-1,\x)}
        \eval \m  {mod(\xc,12)}
        \pgfmathsetmacro \ha {1.75 + floor((\x+1)/2)}
        \pgfmathsetmacro \hb {1.25 + .9 * \ha * \dir}

        \coordinate (xd) at ($(c\xa_c)-(\dir,0)$);
        \getxy{($(\var)+(\ceps*\dir, 0)$)}{\cx}{\cy};
        \ucorner{c\xb}{xd}{c\xa_c}{\cx,0}{\cx,\dir};

        \coordinate (xd) at ($(c\xb_c)-(0,\dir)$);
    	\ifnum\m=1
            \acorner{c\xc}{xd}{c\xb_c}{\dir,\hb}{0,\hb};
    	\else
    		\ifnum\m=7
                \bcorner{c\xc}{xd}{c\xb_c}{\dir,\hb}{0,\hb};
    		\else
                \ucorner{c\xc}{xd}{c\xb_c}{\dir,\hb}{0,\hb};
    		\fi
    	\fi
    };

    \node[cblack] (c26_a) at ($(c25_c)-(3,0)$) {};
    \node[cblue] (c26_q) at ($(c26_a)-(0,\ceps)$) {};

    \foreach [count=\y] \x in {0,...,2} {
            \draw[thick] (a\x_p)--(a\x_a)--(a\x_b)--(a\x_c)--(b\x_a)--(b\x_b)--(b\x_c)--(a\y_p);
    }
    \draw[thick] (a3_p)--(c1_a);
    \foreach [count=\y from 2] \x in {1,...,25} {
            \draw[thick] (c\x_a)--(c\x_b)--(c\x_c)--(c\y_a);
    }

%
%    \draw[thick, red] (a0_q)
%                        --(a0_e) node[cred, thin]{}
%                        --(b0_d) node[cred, thin]{}
%                        --(b0_q) --(a1_q)
%                        --(a1_e) node[cred, thin]{}
%                        --(b1_d) node[cred, thin]{}
%                        --(b1_q) --(a2_q)
%                        --(a2_d) node[cred, thin]{}
%                        --(b2_e) node[cred, thin]{}
%                        --(b2_q)
%                        --(c1_d) node[cred, thin]{}
%                        --(c2_e) node[cred, thin]{}
%                        --(c3_d) node[cred, thin]{}
%                        --(c4_e) node[cred, thin]{}
%                        --(c5_d) node[cred, thin]{}
%                        --(c6_e) node[cred, thin]{}
%                        ;
    \begin{pgfonlayer}{background}
%        \draw[thick, red, dashed] (a1_q)
%                        --(a1_d)
%                        --(b1_e)
%                        --(b1_q);
%
%        \draw[thick, red, dashed] (c6_e) -- (a1_e) -- (c7_e) -- (c8_q);

        \draw[black, dotted] \pathborder{a0_p,a0_a,a0_b,a0_c,b0_a,b0_b,b0_c,a1_p,a1_a,a1_b,a1_c,b1_a,b1_b,b1_c,a2_p,a2_a,a2_b,a2_c,b2_a,b2_b,b2_c,a3_p, c1_a,c1_b,c1_c,c2_a,c2_b,c2_c,c3_a,c3_b,c3_c,c4_a,c4_b,c4_c,c5_a,c5_b,c5_c,c6_a,c6_b,c6_c,c7_a,c7_b,c7_c,c8_a,c8_b,c8_c,c9_a,c9_b,c9_c, c10_a,c10_b,c10_c,c11_a,c11_b,c11_c,c12_a,c12_b,c12_c,c13_a,c13_b,c13_c,c14_a,c14_b,c14_c,c15_a,c15_b,c15_c,c16_a,c16_b,c16_c,c17_a,c17_b,c17_c,c18_a,c18_b,c18_c,c19_a,c19_b,c19_c, c20_a,c20_b,c20_c,c21_a,c21_b,c21_c,c22_a,c22_b,c22_c,c23_a,c23_b,c23_c,c24_a,c24_b,c24_c,c25_a,c25_b,c25_c,c26_a}{\ceps};
    \end{pgfonlayer}

\end{tikzpicture} 
    }
    \caption{A completed construction for the formula
        $\Phi = (x \vee y \vee z) \wedge
        (\overline{x} \vee y \vee \overline{z}) \wedge
        (\overline{x} \vee \overline{y} \vee z) \wedge
        (x \vee \overline{y} \vee \overline{z})$.}
    \label{fig:unique}
\end{figure}
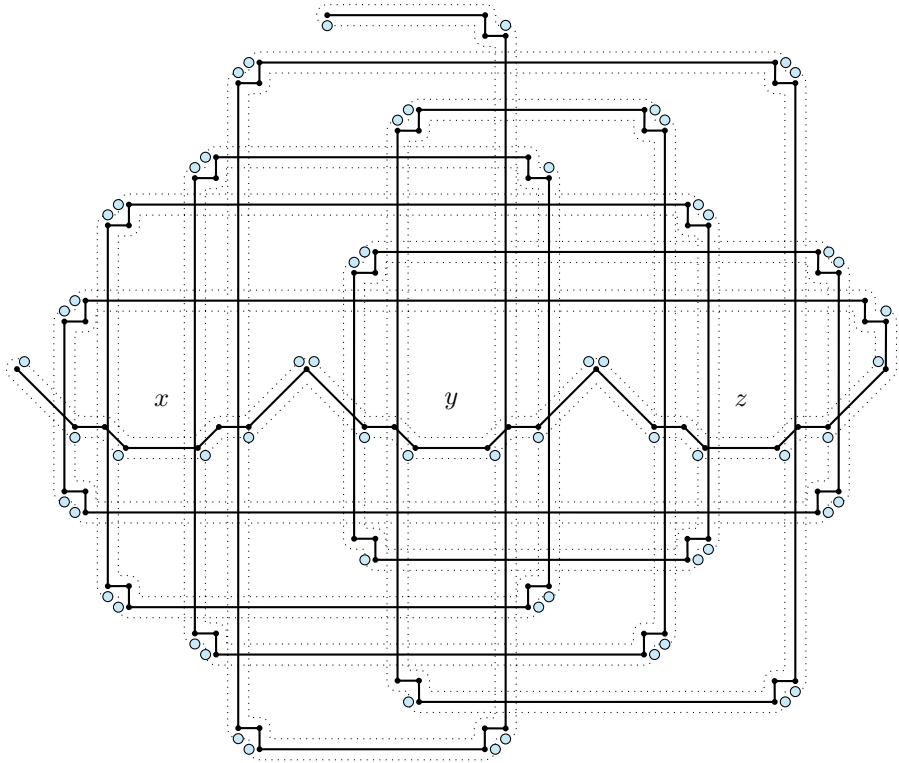

\subsection{Hardness Result}
\begin{lemma}
There exists a vertex-unique polygonal path $Q$ on $S$ with $\delta_F(P, Q) \le \eps$ if and only if the formula $\Phi$ is satisfiable.
\end{lemma}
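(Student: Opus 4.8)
The proof is the standard two-direction correctness argument for an NP-hardness reduction, and the plan is to rest it on a single structural claim about how any valid curve must behave at a chain of \acs, then feed that claim into both directions. The only external tool needed is the cylinder necessary condition stated in Section~\ref{sect:hardness_prelims}: every vertex of a curve $Q$ with $\delta_F(P,Q)\le\eps$ lies in $\cyl(L,\eps)$ for some segment $L$ of $P$, and the reparametrizations witnessing the \Frechet distance are continuous and non-decreasing, hence monotone along $P$.

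The structural claim I would prove first is this: as $Q$ is traced in order, the portion of $Q$ matched (under the witnessing reparametrizations) to the two segments meeting at a given \ac must pass through one of the corner points present there; and because two consecutive corners of a chain share, inside the relevant cylinders, only their common corner point, once $Q$ has committed to the ``inner'' or the ``outer'' alternative it is pinned to that alternative for the remainder of the chain, the sole exception being when $Q$ uses an extra point placed on the intersection of two cylinder boundaries, which is exactly the track-switching mechanism of Figure~\ref{fig:switch}. I would establish this corner by corner, enumerating which points of $S$ fall into each cylinder and checking that a monotone traversal has no other option. Everything downstream is bookkeeping on top of this claim.

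For the ``if'' direction, given a satisfying assignment I would build $Q$ explicitly: traverse the variable section taking the inner path for each {\sc true} variable and the outer path for each {\sc false} variable (the red curve of Figure~\ref{fig:variables}), which consumes precisely the literal points of polarity opposite to the assigned value; then run each clause loop, entering on the single track left available by the omitted corner point of the first \ac and, at the literal point of some literal the assignment satisfies — such a point exists because the clause is satisfied, and is still unused because the variable section consumed only opposite-polarity points — switching to the other track and proceeding to the loop's exit (the dashed curve of Figure~\ref{fig:clause}), thereby avoiding the dead end produced by the second omitted corner point. Two verifications remain: $\delta_F(P,Q)\le\eps$, via the obvious monotone matching sending each maximal sub-path of $Q$ contained in one cylinder to the corresponding segment of $P$ (all vertices lie in the correct cylinder by construction, and the corner and switch points sit on cylinder boundaries, so the connecting segments of $Q$ stay within $\eps$); and vertex-uniqueness, which holds because the variable-section path and each clause-loop path use disjoint literal points and every other vertex belongs to a single gadget. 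For the ``only if'' direction, starting from a vertex-unique $Q$ with $\delta_F(P,Q)\le\eps$, I would apply the structural claim to the variable section to read off, for each variable, whether $Q$ took the inner or outer path, defining a truth assignment; then apply it to each clause loop, where the omitted corner point in the first \ac forces $Q$ onto a fixed track and the later omitted corner point would dead-end $Q$ unless it switches at one of the clause's three literal points — and since each literal point occurs only once in $S$ and is unavailable here exactly when the variable section already used it, the switch point $Q$ uses corresponds to a literal the assignment makes true, so every clause is satisfied.

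The main obstacle is the structural claim feeding the ``only if'' direction: proving rigorously that a valid $Q$ cannot sneak past a dead end without spending a genuine literal point. This is where the careful cylinder-overlap case analysis around each \ac must be carried out (which points of $S$ stay reachable while the matching remains monotone), and where the precise geometry of the \ac gadget and the placement of the boundary points in Figure~\ref{fig:switch} does the real work; by contrast, the ``if'' direction and the vertex-uniqueness accounting are routine once that geometry is pinned down.
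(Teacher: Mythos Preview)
Your proposal is correct and follows essentially the same two-direction argument as the paper: a satisfying assignment yields an explicit traversal that avoids every dead end, and conversely any valid $Q$ induces an assignment via the variable-section choices whose consistency with the clause loops is forced by the dead-end geometry. Your version is considerably more careful than the paper's own proof, which is a brief sketch that leaves the structural claim (that $Q$ is pinned to one track through each \ac chain and can only switch at a boundary literal point) entirely to the preceding construction text and figures; you are right to flag that claim as where the real work lies, though the paper does not spell it out.
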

\begin{proof}
For the forward direction, assume $\Phi$ has a satisfying assignment.  The variable portion of the construction always has a vertex-unique polygonal path $Q$ on $S$ with $\delta_F(P, Q) \le \eps$; assume each variable gadget is chosen according to the satisfying assignment.  Then each clause loop will have at least one literal point that can be used to change tracks before its dead end is reached.  After all clause loops have been traversed, the path will have \Frechet distance less than $\eps$ from P.

For the backward direction, assume $\Phi$ does not have a satisfying assignment.  Then, no matter how the initial variable portion of the construction is traversed, there will be at least one clause loop which will not be able to use any literal point it passes.  Once the end of the clause loop is reached, there will be no way to continue the path without increasing the \Frechet distance beyond $\eps$.
\end{proof}

The variable section contains two \acs for each variable, and the clause section contains six \acs for each clause, so the construction is clearly of polynomial size.  This leads to the following result.

\begin{theorem}
The Unique Subset Continuous CPSM Problem is NP-complete.
\end{theorem}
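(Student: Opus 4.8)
The plan is to prove the two directions of NP-completeness separately, with almost all of the substantive work already carried out by the construction and the preceding Lemma. For membership in NP, take the curve $Q$ itself as the certificate. In the Subset variant a valid $Q$ is vertex-unique, so it uses at most $k$ points and is of polynomial size; given $Q$ one checks in linear time that no two of its vertices coincide and that every vertex belongs to $S$, and one decides $\delta_F(P,Q)\le\eps$ in polynomial time via the Alt--Godau free-space procedure cited in the Previous Work section. Hence the Unique Subset Continuous CPSM problem lies in NP.

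For NP-hardness I would use the reduction from the NP-complete $(3,B2)$-SAT problem assembled in the previous subsections: from a formula $\Phi$ it outputs the curve $P$ (the variable gadgets concatenated with the clause loops, with dead-end \acs between consecutive loops), the point set $S$ (the black spine vertices, the blue cylinder-boundary points, and the red literal points), and the radius $\eps$ defining the cylinders. Two facts then finish the argument. First, the reduction runs in polynomial time: there are two \acs per variable and six per clause, each assembled from a constant number of points and segments whose coordinates are simple rational functions of the gadget index, so $|P|$, $|S|$, and the encoding length of all coordinates and of $\eps$ are polynomial in $|\Phi|$. Second, correctness of the reduction is exactly the statement of the Lemma: $\Phi$ is satisfiable if and only if there is a vertex-unique polygonal path $Q$ on $S$ with $\delta_F(P,Q)\le\eps$. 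A polynomial-time many-one reduction from an NP-complete problem to ours, together with membership in NP, yields NP-completeness.

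The genuine difficulty is not in this wrap-up but is concentrated in the Lemma and the geometry of the \acs, and I would expect the proof of \emph{that} statement to be the real obstacle: one must verify that a \ac admits exactly the two intended traversals, that an extra point on the cylinder boundary is the \emph{only} mechanism permitting a switch between them, and that the three literal points of a clause can be placed precisely on the appropriate cylinder boundaries of the loop segments — without the loop geometry introducing an unintended shortcut, reusing a vertex, or forcing a \ac whose turn is too sharp to be threaded at \Frechet distance $\eps$. Granting the Lemma and the linear bound on the number of \acs, the theorem follows immediately.
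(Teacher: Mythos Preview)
Your proposal is correct and follows essentially the same route as the paper: invoke the Lemma for correctness of the reduction, and observe that with two \acs per variable and six per clause the construction has polynomial size. You are in fact more thorough than the paper, which omits the explicit NP-membership argument you supply via the Alt--Godau decision procedure; otherwise the wrap-up is identical.
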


\section{Continuous Non-unique Subset CIPSM} \label{sec:imprecise}
In this section, we discuss the CIPSM problem, in which the goal is to resolve each region to a single point so that there exists a polygonal curve on those points with \Frechet distance at most $\eps$ from the given curve.  For simplicity, we will treat the imprecise points as line segments, but we observe that all results trivially extend to other regions.  The CIPSM problem has eight versions corresponding to the eight versions of the CPSM problem.  The following observation is immediate:

\begin{observation}
The five NP-complete versions of the CPSM problem imply the NP-completeness of their corresponding CIPSM problem versions.
\end{observation}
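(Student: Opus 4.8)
The plan is to observe that each CPSM variant embeds into its CIPSM counterpart as the degenerate case in which every imprecise region is a single point, and then to check membership in NP separately.

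First I would make the reduction explicit. Given an instance $(P, S, \eps)$ of one of the five NP-complete CPSM versions, with $S = \{s_1, \dots, s_k\} \subset \mathbb{R}^d$, construct the CIPSM instance $(P, S', \eps)$ where $S' = \{L_1, \dots, L_k\}$ and each $L_i$ is the trivial line segment consisting of the single point $s_i$ (a segment of length zero). This map is computable in linear time. The only admissible resolution of each $L_i$ is $s_i$ itself, so a polygonal curve $Q$ on a (sub)set of the resolved points with the required uniqueness/repeatability property and $\delta_F(P, Q) \le \eps$ exists for the CIPSM instance if and only if one exists for the original CPSM instance. Since the All-Points/Subset distinction and the Unique/Non-unique distinction are preserved verbatim by this correspondence, each of the five hard CPSM versions reduces to the matching CIPSM version, which is therefore NP-hard.

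Next I would argue membership in NP. A certificate consists of a choice of resolved point $p_i \in L_i$ for each region (for the relevant versions these can be taken to be endpoints or, more carefully, rational points of polynomial bit-length, since the feasible region for each $p_i$ is a polytope defined by the cylinder constraints) together with the vertex sequence of $Q$. Given this, one verifies the structural constraint (each vertex lies in the appropriate set, with the required uniqueness or all-points condition) in polynomial time, and then checks $\delta_F(P, Q) \le \eps$ using the Alt--Godau decision procedure, which runs in $O(nk)$ time for the continuous metric (and is even simpler for the discrete metric). Hence all eight CIPSM versions are in NP, and in particular the five in question are NP-complete.

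There is no real obstacle here; the one point needing a sentence of care is the bit-complexity of the witness point locations, which is handled by noting that the constraint ``$p_i$ lies in $\cyl(L, \eps)$ for some segment $L$ of $P$'' restricts each $p_i$ to a semialgebraic set of constant description complexity, so a satisfying assignment (if one exists) can be taken with coordinates of polynomially bounded encoding length.
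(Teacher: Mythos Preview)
Your proposal is correct. The paper gives no proof at all for this observation; it simply labels it ``immediate,'' and your reduction via degenerate (zero-length) regions is precisely the obvious special-case embedding the paper is tacitly invoking. Your additional discussion of NP membership and witness bit-complexity goes beyond anything the paper addresses (the paper, like much of the computational-geometry literature, implicitly works in a real-RAM model and does not engage with encoding-length issues), so that part is extra care on your side rather than a divergence from the paper's argument.
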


Here, we show that the Continuous Non-unique Subset CIPSM problem is NP-complete, using a reduction similar to the one presented in the previous section.  A key property of the construction in Section \ref{sec:precise} is that after the variable section has been traversed, exactly two of the four corner points of each variable remain usable by the clause section.  This is due to the fact that points cannot be reused, and two of the four points must be used to traverse the variable section.  To adapt the reduction to the Non-unique CIPSM problem, we simply connect the two points of each corner into a single imprecise segment (Figure \ref{fig:impcorners}).  Since each imprecise segment must resolve to a single point, and since points can be used more than once in this version of the problem, the end result after traversing the variable section is exactly the same: two points for each variable are available for the clause section to choose, either in the positive literal positions or the negative literal positions.  Thus, instead of ``using up'' corner points, we are ``making them available'' for the clause loops to use to escape their dead ends.

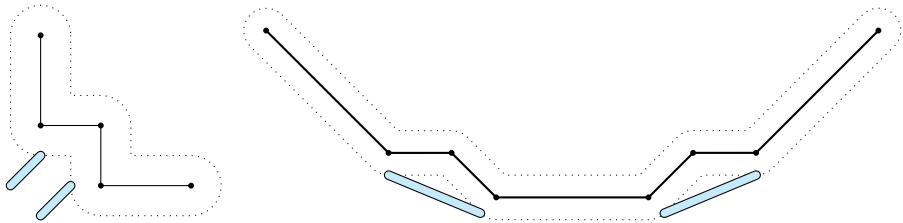
\begin{figure}
    \begin{subfigure}[t]{0.25\textwidth}
		\tikzsetnextfilename{imp_corner}
\begin{tikzpicture}[scale=.4,rotate=0]
    \tikzdefines

    \foreach [count=\x] \pt in {(0,5), (0,2), (2,2), (2,0), (5,0)}
        \node[cblack] (p\x) at \pt {};
    \draw (p1)--(p2)--(p3)--(p4)--(p5);
    \draw[dotted] \pathborder{p1,p2,p3,p4,p5}{1cm};

    %\node[cblue,label=0:$a$] (s1) at (1,5) {};
    \draw[double distance=.1cm,line cap=round,double=cyan!20] (-1,0) -- (0,1);
    %\node[cblue] (s2) at (0,1) {};
	\draw[double distance=.1cm,line cap=round,double=cyan!20] (0,-1) -- (1,0);
    %\node[cblue] (s3) at (1,0) {};
    %\node[cblue,label=90:$d$] (s4) at (5,1) {};

    %\draw[thick, black, dashed] (s1) -- (s2) -- (s4);
    %\draw[thick, black, dashed] (s1) -- (s3) -- (s4);
    %\draw[thick, red, dashed] (s1) -- (s2) -- (s3) -- (s4);

    %\draw[thick, red, dashed] (s1) -- (s2) -- (s4);
    %\draw[thick, red, dashed] (s1) -- (s3) -- (s4);
    %\draw[thick, red] (s1) -- (s2) -- (s3) -- (s4);

\end{tikzpicture} 
        \label{fig:impcorner}
    \end{subfigure}
    \hfill
    \begin{subfigure}[t]{0.75\textwidth}
        \tikzsetnextfilename{imp_variable}
\begin{tikzpicture}[scale=1.48]
    \tikzdefines

    \node[cblack] (a0_p) at (-0.5,1.5) {};
    \xcorner{a0}{0,1}{1,0}{1,0}{2,0};
    \xcorner{b0}{1,0}{2,0}{3.5,0}{4.5,1};
%        \node[cred] (a\x_q) at ($(a\x_p)!\ceps!90:(a\x_a)$) {};
%        \node[cred] (b\x_q) at ($(a\y_p)!\ceps!270:(b\x_c)$) {};
   \node[cblack] (a1_p) at (5,1.5) {};
   \draw[thick] (a0_p)--(a0_a)--(a0_b)--(a0_c)--(b0_a)--(b0_b)--(b0_c)--(a1_p);
   
   \draw[double distance=.1cm,line cap=round,double=cyan!20] (a0_d)--(a0_e);
   \draw[double distance=.1cm,line cap=round,double=cyan!20] (b0_d)--(b0_e);
   
%    \node () at ($($(a0_b)!.5!(b0_b)$)+(0,.5)$) {$x$};
%    \node () at ($($(a1_b)!.5!(b1_b)$)+(0,.5)$) {$y$};
%    \node () at ($($(a2_b)!.5!(b2_b)$)+(0,.5)$) {$z$};

    %\node[label={below:$x$}] () at (a0_d) {};
    %\node[label={below:$x$}] () at (b0_e) {};
    %\node[label={below:$\overline{x}$}] () at (a0_e) {};
    %\node[label={below:$\overline{x}$}] () at (b0_d) {};

    \begin{pgfonlayer}{background}
        \draw[black, dotted] \pathborder{a0_p,a0_a,a0_b,a0_c,b0_a,b0_b,b0_c,a1_p}{\ceps};
    \end{pgfonlayer}

\end{tikzpicture} 
		\label{fig:impvar}
    \end{subfigure}
    \caption{(Left) Most points are replaced with line segments that have one endpoint inside the cylinder.  (Right) The two corner points of the variable section's corners are replaced with a single segment joining them. \vspace{-10pt}}
        
     \label{fig:impcorners}
\end{figure}

We have modeled imprecise points as line segments for simplicity.  However, the model can easily be extended to disks by placing them so that they are tangent to the appropriate cylinders at the appropriate locations. Other shapes can also be positioned to correctly intersect the cylinders by aligning the cylinder boundaries at their extremal points.  Since the entire construction is scalable, there is no danger of being forced to place imprecise points close enough to interfere with each other. This leads to the following theorem.

\begin{theorem}
The Continuous Non-unique Subset CIPSM is NP-complete.
\end{theorem}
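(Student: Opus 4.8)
The plan is to prove membership in NP and NP-hardness separately, recycling the construction of Section~\ref{sec:precise}. For membership, I would use as a certificate only a \emph{resolution} --- one point chosen inside each imprecise segment. Because the segments, $P$, and $\eps$ are rational, the set of resolutions that admit a curve $Q$ on the resolved points with $\delta_F(P,Q)\le\eps$ is semialgebraic, so a standard argument shows that if it is nonempty it contains a point of polynomially bounded bit length; hence the certificate is short. Once the resolution is fixed, what remains is an instance of the Continuous Non-unique Subset CPSM problem on precise points, which is decided in polynomial time by the algorithm of Maheshwari et al.\ \cite{Maheshwari11} (alternatively one guesses $Q$, which may be taken of polynomial length, and verifies with \cite{Alt95}). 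Thus the problem is in NP.

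For hardness I would reduce from $(3,\mathrm{B2})$-SAT \cite{Berman03}, reusing the curve $P$ of Section~\ref{sec:precise} essentially verbatim and turning the point set into imprecise segments in the two ways shown in Figure~\ref{fig:impcorners}. Every point of $S$ that is not one of a variable gadget's four corner points is replaced by a short segment that crosses the boundary of the single cylinder of $P$ it was meant to occupy, with exactly one endpoint inside that cylinder and the rest of the segment outside every cylinder of $P$; since a necessary condition for $\delta_F(P,Q)\le\eps$ is that every vertex of $Q$ lie in some cylinder of $P$, such a segment is effectively pinned to its original location. The two corner points of each of a variable gadget's two \acs are then merged into one imprecise segment; because that segment exits the cylinder near the corner, its only cylinder-contained resolutions sit near one endpoint or the other, so it encodes a binary choice. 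Scalability of the whole construction lets me keep every segment away from cylinders other than its own and keep the in-cylinder portions of distinct segments disjoint.

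The core step is the analogue of the Lemma of Section~\ref{sec:precise}: there is a resolution together with a curve $Q$ on the resolved points with $\delta_F(P,Q)\le\eps$ if and only if $\Phi$ is satisfiable. For the forward direction I mirror the precise argument: resolve each variable's two corner segments according to the satisfying assignment, resolve every other segment to its in-cylinder endpoint, traverse the variable section as before, and note that, since each literal occurs exactly twice while each variable now makes exactly two corner-point locations available on the side its value dictates, every clause loop still finds an available literal point at which to switch tracks before hitting its dead end. For the backward direction I would argue that the variable section is traversable only if the two corner segments of each variable resolve \emph{consistently} --- the gadget geometry excludes the mixed pair --- so a feasible resolution induces a well-defined truth assignment; and since the only points at a clause loop's literal-point locations are those resolved corner segments, and point reuse merely permits rather than forces their use, a clause loop gets past its dead end exactly when one of its literals is satisfied. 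A $Q$ that traverses every dead-end corner therefore witnesses a satisfying assignment. The construction has $O(1)$ imprecise segments per variable and per clause plus $O(n+k)$ pinned segments, so the reduction is polynomial.

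The step I expect to be the obstacle is this backward direction: showing that collapsing two corner points into one imprecise segment while \emph{simultaneously} allowing points to be reused does not create spurious solutions --- that the two corner segments of a variable cannot resolve in a mixed way that decouples the gadget, and that no clause loop can dodge its dead end by reusing a point resolved for another gadget. Both hinge on the geometric facts that each imprecise segment's admissible resolutions are confined to the two intended spots and that distinct gadgets never share a cylinder, which is precisely what scalability of the construction provides. With these in place the reduction is correct, and together with membership in NP this yields NP-completeness; the same argument carries over to disks placed tangent to the relevant cylinders, or to other convex regions aligned so that their extremal points meet the cylinder boundaries correctly.
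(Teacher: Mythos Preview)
Your proposal is correct and takes essentially the same approach as the paper: merge each variable-gadget corner pair into a single imprecise segment, pin every other point with a short segment having one endpoint inside its intended cylinder, and reuse the curve $P$ from Section~\ref{sec:precise} verbatim. You are in fact more thorough than the paper itself, which gives no explicit NP-membership argument and whose backward direction is a single sentence that does not address the mixed-resolution concern you correctly flag and resolve via the separation-corner geometry.
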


\section{Discrete CIPSM Problem} \label{sec:dimprecise}
In this section, we study the CIPSM under discrete \Frechet distance and deal with the remaining two open problems: Discrete Non-Unique versions, Subset and All-Points.

\subsection{Preliminiaries}
Discrete \Frechet distance \cite{Eiter94} is a variation of the standard \Frechet distance that only takes into account distance at the curve vertices.  For a given curve $P$ of length $n$, let $F(P)$ be the set of all finite sequences $a$ of length $m$ such that $a_1 = 1$, $a_{m} = n$, and $a_{i+1} \in \{a_i, a_i+1\}$ for all $i$.  Then the \emph{discrete \Frechet distance} between two curves $P$ and $Q$ can be defined as $\min_{a \in F(P), b \in F(Q)} \max_i d(P_{a_i}, Q_{b_i})$.

It is interesting to note that, since the edges of the given curve have no impact on the discrete \Frechet distance, and since we are allowed to visit the points of $S$ in any order, then the discrete \Frechet distance between a given realization of $S$ and $P$ is the same for any curve with the same vertex set as $P$.  Thus, the problem can be restated as follows: does there exist a realization of $S$ such that every $\eps$-ball around the vertices of $P$ contains at least one realized point?  Although the edges are irrelevant, we still include them in the figures below for the sake of clarity.  However, it is important to remember that the vertices can be connected in any way and the result will not change.  Thus, our reduction applies even if the input curve is non-intersecting.

\subsection{Reduction Outline}
In the reduction for the continuous version (Section \ref{sec:imprecise}), imprecise points overlapped with different cylinders to provide different options.  But when dealing with discrete \Frechet distance, cylinders don't matter; only the balls around the vertices do.  In this reduction, we will place imprecise points so that they overlap with consecutive vertices in the given curve, thus forcing a choice between resolving in the ball of the former vertex or that of the latter.

The key gadget in this reduction is shown in Figure \ref{fig:dimp_edge}.  Placing imprecise points so they overlap with every two consecutive vertices of the given curve allows information to be transfered along the length of the curve.  In the absence of point $a$, the first imprecise point must resolve to a point in the former of the two balls it overlaps.  This in turn forces the second imprecise point to resolve to a point in its former ball, and so on.  However, if $a$ is available for use, then the first imprecise point is free to resolve in the latter ball, as are all the rest, leaving the last imprecise point free to resolve to a point outside the current chain.  In other words, point $b$ is available if and only if point $a$ is available.  We refer to the configuration where $b$ is (not) available as the \emph{forward (backward) position}.

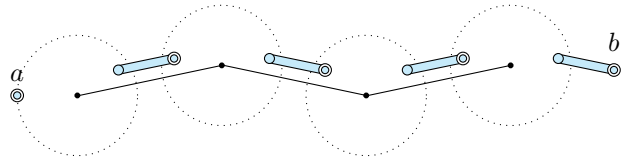
\begin{figure}[h]
    \centering
    \makebox[\textwidth][c]{
    \tikzsetnextfilename{dimp_edge}
\begin{tikzpicture}[scale=4]
    \tikzdefines
    \def\p{3}
    \foreach \x in {0, ..., \p} {
        \node[cblack] (p\x) at (\x*     \ceps*2.4 , {mod(\x, 2)/10}) {};
        \coordinate (t) at     ({(\x+1)*\ceps*2.4}, {mod((\x+1), 2)/10}) {};
        \begin{pgfonlayer}{background}
            \draw[dotted] (p\x) circle (\ceps);
        \end{pgfonlayer}
        \coordinate (qa\x) at ($(p\x)!.8*\ceps!20:(t)$);
        \coordinate (qb\x) at ($(t)!.8*\ceps!-20:(p\x)$);
        \draw[lblue] (qa\x)--(qb\x);
        \node[cblue] () at (qa\x) {};
        \node[cblue, double] () at (qb\x) {};
    };
    \foreach \x  [evaluate=\x as \y using (\x-1)] in {1, ..., \p} {
        \draw (p\y)--(p\x);
    }
    \node[cblue, double, label={above:$a$}] () at ($(p0) + (-\ceps, 0)$) {};
    \node[label={above:$b$}] () at (qb\p) {};
\end{tikzpicture} 
    }
    \caption{The imprecise points can resolve to the double-circle points if and only if $a$ exists.  Without $a$, the single-circle points are the only choice.}
    \label{fig:dimp_edge}
\end{figure}

While this strategy works well for transferring information, it is not sufficient by itself to create the necessary constructs to simulate variables.  To accomplish this type of true-or-false behavior, we can arrange the given curve in a circle, creating a circular dependency. This forces all imprecise points to resolve to either the former or latter of their respective balls.  Figure \ref{fig:dimp_variable_chain} shows three such constructs.

For the imprecise points in these variable constructs, the forward positions denote the value of {\sc true} for the variable and the backwards positions denote {\sc false}.  These points will then serve as the ``extra point'' in the information propagation gadgets discussed earlier, which will bring together those variables referenced in a specific clause. 

%\begin{figure}[h]
%    \centering
%    \makebox[\textwidth][c]{
%    \input{fig_dimp_variable.tex}
%    }
%    \caption{A variable gadget.}
%    \label{fig:dimp_variable}
%\end{figure}

\subsection{Full Construction}
Our construction must take as input a 3CNF formula $\Phi$ whose incidence graph (plus a simple cycle about the variable nodes) is planar.  Our full construction consists of three parts:
\begin{itemize}
\item The Variable section, which represents the variables of $\Phi$ and enforces a {\sc true} or {\sc false} value on each.
\item The Transfer section, which brings the three variables involved in each clause together, transferring information about the value of each variable.
\item The Clause section, which ensures that each clause contains at least one {\sc true} literal.
\end{itemize}

First, we describe the Variable section.  As described earlier, each variable is represented by a circular arrangement of imprecise points.  To ensure that edges can extend both upward and downward from the horizontal row of variable gadgets, the curve runs through top halves of each variable before doubling back and running through the bottom halves, as shown in Figure \ref{fig:dimp_variable_chain}.  The size of each cycle depends on how many times the variable is used.  The Transfer section then connects each variable to where the corresponding clause points will be.  The curve starts exactly $\eps$ distance away from the appropriate literal point and travels to the location the clause point will later be placed.  It then doubles back before moving onto the next literal, in order not to introduce any unnecessary edges that might inhibit future segments.  Finally, the Clause section visits each clause location without any extra imprecise points, forcing one of the edges connected to a variable to serve as the point to pair with.

\begin{figure}[h]
    \centering
    \makebox[\textwidth][c]{
    \tikzsetnextfilename{dimp_variable_chain}
\begin{tikzpicture}[scale=3]
    \tikzdefines
    \def\p{3}
    \def\rad{\p*\ceps*.7}
    \def\m{\ceps*6.5}
    \foreach \n in {0, ..., 5} { 
   	\begin{scope}[shift={({(abs(\n-2.5)-1)*\m},-0.05*floor(\n/3))}, rotate={180*floor(\n/3)}]
	    \foreach \x in {0, ..., \p} {
		    \node[cblack] (p\n\x) at (\x*180/\p:\rad) {};
		    \begin{pgfonlayer}{background}
		    	\draw[dotted] (p\n\x) circle (\ceps);
		    \end{pgfonlayer}
		};
	    \foreach \x  [evaluate=\x as \y using (\x-1)] in {1, ..., \p} {
	      	\coordinate (qa\n\x) at (\y*180/\p+\ceps*2:\rad*1.1) {};
	      	\coordinate (qb\n\x) at (\x*180/\p-\ceps*2:\rad*1.1) {};
	      	\draw[lblue] (qa\n\x)--(qb\n\x);
		 	\draw (p\n\y)--(p\n\x);
		};
	\end{scope}};
	%\draw (p0\p)--(p10);
    \foreach \x in {1, ..., 5} {
    	\pgfmathtruncatemacro{\y}{\x-1}%
    	\draw (p\y\p)--(p\x0);
	};
    \draw[arrows=>-] ($(p00)+(.3,0)$) --(p00);
	\draw[arrows=<-] ($(p5\p)+(.3,0)$) -- (p5\p);

	\node[cblack] (p1) at ($(qb11) + (45:\ceps)$) {}; 
	\draw[dotted] (p1) circle (\ceps);
	\node[cblack] (p2) at ($(p1) + (45:\ceps*2.1)$) {}; 
	\draw[dotted] (p2) circle (\ceps);
	\draw (p1)--(p2);
	
	\coordinate (p0) at ($(p1) + (135:\ceps*1.2)$);
	\draw (p0)--(p1);
	\draw[dashed] (p0)--($(p1)!2!(p0)$);

	\coordinate (p3) at ($(p2) + (0:\ceps*1.2)$);
	\draw (p2)--(p3);
	\draw[dashed] (p3)--($(p2)!2!(p3)$);
	   	
   	\draw[lblue] ($(p1)!.8*\ceps!20:(p2)$) -- ($(p2)!.8*\ceps!-20:(p1)$);

\end{tikzpicture} 
    }
    \caption{The Variable section.  One cycle is created per variable.}
    \label{fig:dimp_variable_chain}
\end{figure}
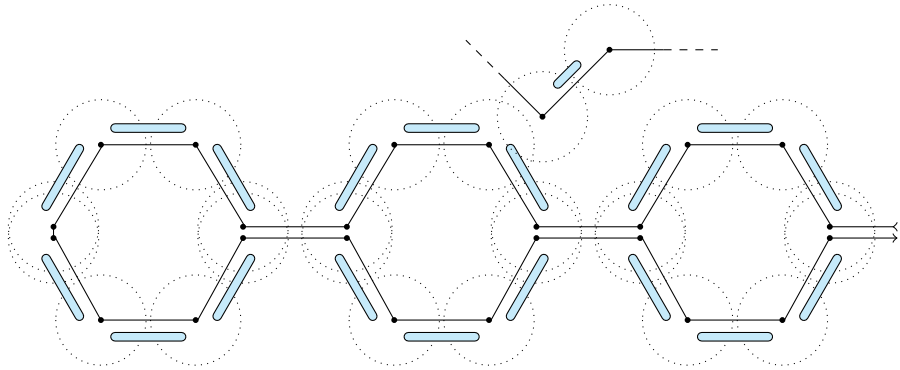

\subsection{NP-Completeness Proof}
\begin{lemma}
The construction has a valid curve if and only if $\Phi$ is satisfiable.
\end{lemma}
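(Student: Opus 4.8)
The plan is to prove both implications via the restatement established above: the construction admits a valid curve exactly when there is a realization of the imprecise segments in which every $\eps$-ball centered at a vertex of $P$ contains at least one realized point. (For the All-Points variant one additionally needs every realized point to lie within $\eps$ of some vertex of $P$; this holds automatically, since every segment in the construction has both endpoints inside vertex balls, so the endpoint-only realizations we ever use satisfy it for free.) Because the construction is scalable, all vertex balls are pairwise disjoint, so a single resolved point covers at most one ball and the only candidates to cover a given ball are the endpoints of the imprecise segments locally incident to it; the entire argument is bookkeeping over these local choices.

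For the forward direction, suppose $\Phi$ has a satisfying assignment $\alpha$. In a single variable cycle, the vertices and imprecise segments form an even cycle in which each segment is incident to exactly the two vertex balls flanking it; as an incidence structure this cycle has exactly two perfect matchings between segments and balls — the ``all-forward'' and ``all-backward'' states of Figure~\ref{fig:dimp_edge} — each of which covers every ball on the cycle. I would put the cycle of variable $v$ in the state dictated by $\alpha(v)$. Each Transfer chain is another copy of the gadget of Figure~\ref{fig:dimp_edge}, anchored at a literal point of some variable cycle: when that literal is {\sc true} under $\alpha$, the chosen cycle state makes the anchor available, so the chain can be put in the forward position, covering all of its own balls and depositing its terminal point on the clause location; when the literal is {\sc false}, the chain is forced backward but still covers all of its own balls. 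Since $\alpha$ satisfies every clause, at least one Transfer chain incident to each clause location is forward, so the clause vertex ball — which carries no imprecise points of its own — is covered. Every other ball lies on a cycle or chain and is covered regardless of orientation, so a valid curve exists.

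For the backward direction, suppose a valid realization is given. Fix a variable cycle: since its balls are disjoint and each can be covered only by one of its two flanking segments, the covering is a perfect matching of the underlying even cycle, of which there are exactly two; hence each variable cycle is entirely forward or entirely backward, and reading {\sc true}/{\sc false} off these states defines an assignment $\alpha$. Now fix any clause. Its vertex ball must be covered, but the only realized points that can reach it are the terminal points of the Transfer chains incident to it, so some such chain has its terminal point realized, i.e.\ is in the forward position. Walking that chain from its terminal end back toward its anchor — each vertex ball in turn, not being covered by the segment downstream of it, must be covered by the one upstream — shows the anchor point is realized, and by construction the anchor is available only in the cycle state that makes the corresponding literal {\sc true}. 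So $\alpha$ satisfies that clause, and as the clause was arbitrary, $\alpha$ satisfies $\Phi$.

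The step I expect to demand the most care is the non-interference bookkeeping behind ``the only realized points that can reach this ball are $\ldots$''. Geometrically, one must lay out the Transfer section so its chains run from the variable row up and down to the clause locations without any segment endpoint wandering into a foreign vertex ball; this is where the planarity of the incidence graph (augmented with the cycle through the variable nodes) is used to route the chains crossing-free, and where scalability is used to separate unrelated gadgets. Combinatorially, one must check that branching a Transfer chain off a variable cycle introduces no third way to cover the anchor's ball, so that the ``two perfect matchings'' dichotomy and the ``anchor available iff literal {\sc true}'' equivalence both persist. Establishing these finitely many local adjacency cases is routine, but it is the only part of the proof that is not immediate.
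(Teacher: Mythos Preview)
Your proof is correct and takes essentially the same approach as the paper's, which argues the two directions at the same high level: a satisfying assignment lets you set the variable cycles and chains so every clause vertex gets a partner, while (via the contrapositive) unsatisfiability forces some clause vertex to have all three incident chains in the backward position and hence go uncovered. You supply considerably more structure than the paper does --- the perfect-matching dichotomy on the variable cycles, the direct extraction of an assignment in the backward direction rather than the contrapositive, and the explicit flagging of the planarity/non-interference bookkeeping --- but these are elaborations of the same argument rather than a different route.
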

\begin{proof}
If $\Phi$ is satisfiable, then there is an assignment to the variables that will satisfy every clause.  Thus, every clause point in the construction will have a point to pair with, provided by one of the three edge constructs connected to it that link to a variable gadget.  Since the other parts of the construction always have at least one point to pair with under all circumstances, a valid curve exists.

Conversely, if $\Phi$ is not satisfiable, then there will be some clause point whose connected edges are forced to be in the backwards position, unable to provide the clause with a point to pair with.  In this case, no valid curve can exist.
\end{proof}

Furthermore, there are no unused points in our construction.  If a valid curve exists at all, then it is able to visit every point.  Thus, this proof applies to both the Subset and All-Points versions of the problem.

\begin{theorem}
The Discrete Non-Unique CIPSM problems, both Subset and All-Points, are NP-Complete, even when the given curve $P$ is simple.
\end{theorem}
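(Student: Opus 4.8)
The plan is to obtain the theorem by combining the gadget analysis of Figures~\ref{fig:dimp_edge} and~\ref{fig:dimp_variable_chain}, the correctness Lemma just proved, and two routine side remarks: membership in NP, and the polynomial size and planar realizability of the construction. Since the hard combinatorial content is already encapsulated in the Lemma, the remaining work is bookkeeping plus making the geometric layout precise.

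First I would dispatch membership in NP. A certificate is a realization of $S$ (one point chosen inside each imprecise segment) together with, for the All-Points version, a traversal order of the realized points. Verification is polynomial: by the observation in the Preliminaries, the discrete \Frechet distance to any realization depends only on the vertex set of $P$, so it suffices to check that every $\eps$-ball around a vertex of $P$ contains at least one realized point (and, for All-Points, that all points are used); this is clearly polynomial-time. Next, for hardness, I would state that the reduction is from the variant of 3SAT whose incidence graph together with a simple cycle through the variable nodes is planar, which is known to be NP-complete, and observe that the construction is polynomial: each variable cycle has a number of imprecise points proportional to the number of occurrences of that variable, each transfer chain has length proportional to the combinatorial length of the corresponding embedded edge of the incidence graph, and there is exactly one clause point per clause, so $|P|$ and $|S|$ are polynomial in $|\Phi|$ and the construction is computable in polynomial time. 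Correctness is then exactly the Lemma: a valid curve exists if and only if $\Phi$ is satisfiable.

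To get both the Subset and All-Points statements at once, I would point out that no imprecise point in the construction is ever superfluous: every point of a variable cycle and every point of a transfer chain must be resolved and paired with one of the two vertices whose balls it straddles, and every clause point must be paired with some vertex; hence any valid Subset curve already visits all of $S$, so the same reduction establishes hardness of the All-Points version verbatim. Finally, for the ``simple $P$'' clause I would invoke again that discrete \Frechet distance ignores the edges of $P$ entirely, so the vertices may be connected in any order; the planarity hypothesis on $\Phi$ is precisely what lets us route the variable cycles and transfer chains without forcing two edges of $P$ to cross, and the doubling-back built into the construction prevents any stray edge from overlapping a later chain's balls. Therefore $P$ can be drawn non-self-intersecting, and the theorem follows.

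The step I expect to be the real obstacle — everything above being essentially bookkeeping once it is granted — is pinning down the geometric layout: one must show that the transfer chains emanating from the variable cycles and meeting at the clause points can be realized with straight segments whose $\eps$-balls overlap \emph{only} in the intended consecutive pairs, and that the circular dependency inside each variable cycle genuinely forces exactly the two global resolutions (all points forward, or all points backward). This requires choosing $\eps$ small relative to the spacing between distinct gadgets and appealing to the planar embedding of the incidence graph; once that is in place, the forward/backward behaviour described around Figure~\ref{fig:dimp_edge} plugs in directly and the rest of the argument is immediate.
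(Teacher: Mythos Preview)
Your proposal is correct and follows essentially the same approach as the paper: the paper's own argument after the Lemma consists precisely of the two observations you make, namely that there are no unused points (so Subset and All-Points coincide) and that discrete \Frechet distance ignores edges so $P$ may be taken simple. You are more explicit than the paper about NP membership, polynomial size, and the geometric realizability of the layout, but these are exactly the routine pieces the paper leaves implicit, and your treatment of them is sound.
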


\begin{table}[h]
    \centering
   
\begin{tabular}{l@{\hspace{1em}}l@{\hspace{3em}}|@{\hspace{2em}}c@{\hspace{2em}}c}
        \toprule
        \multicolumn{2}{l|@{\hspace{2em}}}{\textbf{CIPSM Problem}}     &  
Discrete & Continuous\\
        \midrule
        Subset  & Unique        & NP-C     & NP-C  \\
                & Non-Unique    & \textbf{NP-C}       & \textbf{NP-C} \\
        All-Points & Unique        & NP-C   & NP-C \\
                & Non-Unique    & \textbf{NP-C}   & NP-C \\
        \bottomrule
    \end{tabular}%
    \bigskip
    \caption{Eight versions of the CIPSM problem.  The bold entries are non-trivial to show. \vspace{-20pt}}
    \label{tab:impresults}%
\end{table}%

\section {An Approximation Algorithm for CPSM} \label{sec:approx}
We have now shown that the entire class of CIPSM problems are all NP-complete.  We note, however, that those versions that are in P for precise points have a simple approximation algorithm.  One can always take any realization of the imprecise points and apply the existing optimization version of the algorithm for the precise version of the problem, and the result will always be within an additive factor of the maximum diameter of all imprecise points in the given set.  We leave the development of a novel algorithm with a better approximation factor as an open problem.  However, in the remainder of this section, we present a non-trivial approximation algorithm for one of the NP-complete versions of the CPSM problem, the Continuous Non-Unique All-Points version.

We formulate a restricted version of the problem where points in $S$ are associated with segments in $P$ by proximity.  Forcing the curve construction over $S$ to respect this proximity yields an easier problem that is solvable in polynomial time using a modified version of the algorithm presented in \cite{Maheshwari11}.  Furthermore, its optimal solution has \Frechet distance at most 3 times that of the unrestricted optimal.  Algorithm 1, presented below, solves this restricted problem. \vspace{-10pt}

    \begin{algorithm}[h]
        \caption{Restricted Non-unique All-Points CPSM $(P, S, \eps)$}
        \label{algo:restricted}
        \begin{algorithmic}[1]
        %\Procedure{RestrictedCPSM}{$P, S, \eps$}
            \State Compute $S_i$ and $S^*_i$ for all $i$ \label{algo:closest}
            \State \textbf{If} any point is outside all $S_i$, \textbf{return} \textbf{no}  \label{algo:preproc}
            \State Compute $r_i(s, t)$ for all $1 \le i \le n$ and $s, t \in S$ \label{algo:precomp}
            \State Compute the entry and exit sets for each segment. \label{algo:entryexit}
            \State Modify $r_i(s, t)$ to obtain $r'_i(s, t)$ \label{algo:modify}
            \State Apply the Subset algorithm using $r'_i(s, t)$ \label{algo:subset}
            \State \textbf{Return} the result
        %\EndProcedure
        \end{algorithmic}
    \end{algorithm}
    \vspace{-10pt}
    
In this description, $S_i$ is the set of points inside the cylinder of the $i$th segment of $P$.  $S^*_i$ is a subset of $S_i$ for which the $i$th segment is the closest.  The algorithm in \cite{Maheshwari11} works by first computing reachability information in the form of a function $r_i(s, t)$.  We modify this reachability function for our purposes to obtain $r'_i(s, t)$, and then run the algorithm with this modified function.  Due to space limitations, we must omit further details, but a more complete description of the algorithm as well as proofs of correctness, time complexity, and approximation factor are given in the appendix.  Analysis of Algorithm 1 leads to the following theorem.

\begin{theorem}
The Continuous Non-Unique All-Points CPSM can be 3-approx-imated in $O(nk^2 \log(nk))$ time.
\end{theorem}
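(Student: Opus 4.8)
The plan is to analyze Algorithm~\ref{algo:restricted} in two independent parts: (i) show that it solves the \emph{proximity-restricted} version of the Continuous Non-Unique All-Points CPSM exactly and in $O(nk^2\log(nk))$ time, and (ii) show that the optimum $\delta^*_{\mathrm{restr}}$ of the restricted problem satisfies $\delta^*\le\delta^*_{\mathrm{restr}}\le 3\delta^*$, where $\delta^*$ is the unrestricted optimum. For part (i), recall that the restriction assigns each point of $S$, via $S^*_i$, to its nearest segment of $P$ (an assignment independent of $\eps$), and permits only curves $Q$ whose \Frechet matching to $P$ covers every point of $S^*_i$ while $P$ traverses segment $i$. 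First I would make the entry/exit sets precise: $s$ lies in the exit set of segment $i$ (equivalently the entry set of segment $i{+}1$) when the edge of $Q$ straddling the vertex $a_{i+1}$ of $P$ can have $s$ as an endpoint while staying within $\eps$ of $a_{i+1}$; these sets are read off the per-segment free-space columns in $O(nk^2)$ time. Next, $r'_i(s,t)$ is obtained from Maheshwari et al.'s reachability function $r_i(s,t)$ by additionally demanding that the monotone sub-path of $Q$ witnessing reachability from $s$ at $a_i$ to $t$ at $a_{i+1}$ passes through every point of $S^*_i$; since the points of $S^*_i$ can be ordered by their projection onto segment $i$ and the reachable region inside one segment's column is an interval, this only trims already-computed intervals and does not raise the asymptotic cost. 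Plugging $r'_i$ into the subset dynamic program of \cite{Maheshwari11} decides the restricted problem in $O(nk^2)$ time, and wrapping it in the parametric search of \cite{Maheshwari11} (or binary-searching the $O(nk^2)$ candidate critical distances) gives the optimization version in $O(nk^2\log(nk))$; lines~\ref{algo:closest}--\ref{algo:entryexit} are preprocessing subsumed by this bound.

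For part (ii), the inequality $\delta^*_{\mathrm{restr}}\ge\delta^*$ is immediate: a curve feasible for the restricted problem is a curve on $S$ covering all points, hence feasible for the unrestricted All-Points problem. For $\delta^*_{\mathrm{restr}}\le 3\delta^*$, let $Q^*$ realize $\delta_F(P,Q^*)=\delta^*$. Because $Q^*$ visits all of $S$ and stays within $\delta^*$ of $P$, every $s\in S$ lies within $\delta^*$ of $P$, hence within $\delta^*$ of its nearest segment $i(s)$. I would then build a proximity-respecting curve $Q'$ block by block: for each $i$, $Q'$ traverses the points of $S^*_i$ in order of their projection onto segment $i$, and the edge of $Q'$ straddling a vertex $a_{i+1}$ is matched to the arc of $P$ between the projections of the last point of $S^*_i$ and the first point of $S^*_{i+1}$ (an empty block $S^*_i$ is absorbed into the adjacent straddling edge). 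Applying the segment-to-segment identity $\delta_F([x,y],[x',y'])=\max(\norm{xx'},\norm{yy'})$ piece by piece, every edge of $Q'$ inside a block contributes at most $\delta^*$, while the displacement at a straddling edge is controlled by chaining at most three hops of length $\le\delta^*$ each — from the point of $Q^*$ matched to $a_{i+1}$, to the projection of the nearest block point, to that block point itself — which is where the factor $3$ enters. Hence $\delta_F(P,Q')\le 3\delta^*$, so $\delta^*_{\mathrm{restr}}\le 3\delta^*$.

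Combining the two parts, Algorithm~\ref{algo:restricted} returns a curve of \Frechet distance in $[\delta^*,3\delta^*]$ in $O(nk^2\log(nk))$ time, which proves the theorem. I expect the main obstacle to be part (ii): specifying the rerouted curve $Q'$ so that it is genuinely feasible for the restricted problem (respects the nearest-segment assignment, threads the entry/exit sets, and handles empty blocks correctly) while simultaneously bounding every accumulated detour so that none of them pushes the \Frechet distance past the $3\delta^*$ budget. A secondary point requiring care is checking that the passage $r_i\mapsto r'_i$ preserves the interval structure on which the $O(nk^2)$ running time of \cite{Maheshwari11} depends.
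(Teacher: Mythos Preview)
Your two–part strategy matches the paper, and the running-time analysis in part (i) is fine. Two substantive discrepancies are worth flagging.

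\medskip
\textbf{Part (i): entry/exit and $r'_i$.} Your characterization of the entry/exit sets (``the edge of $Q$ straddling $a_{i+1}$ can have $s$ as an endpoint while staying within $\eps$ of $a_{i+1}$'') is not the right condition. What matters is whether, having entered segment $i$ at $s$, one can still sweep over every essential point of $S^*_i$; the paper captures this by the interval-order test $\pl(P_i[t])\preceq\pr(P_i[s])$ for all $s\in S^*_i$ (and dually for exit points). Correspondingly, the paper's $r'_i$ is obtained by the purely syntactic rule ``cap $r_i(s,t)$ at the first index $e_i>i$ with $S^*_{e_i}\ne\emptyset$, and zero it if $s$ is not an exit point'', not by rerunning the reachability computation with the extra demand that the witnessing sub-path pass through every point of $S^*_i$. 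Your description might be repaired, but as stated it is not the same invariant and you have not argued the interval structure survives.

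\medskip
\textbf{Part (ii): the factor-$3$ bound.} Here the paper's argument is quite different and your proposed one has a genuine gap. The paper does \emph{not} rebuild a curve in projection order. It takes the unrestricted optimum $Q^*$; for each $s\in S$ not already visited at its nearest segment $P^s$, it picks a point $p\in P^s$ with $s\in\ball(p,\delta^*)$, inserts a new vertex $s'$ at the point of $Q^*$ matched to $p$ (so $\|s'-p\|\le\delta^*$, hence $\|s-s'\|\le 2\delta^*$), and finally slides each such $s'$ to $s$. Because every vertex moves by at most $2\delta^*$ and the monotone matching to $P$ is inherited verbatim from $Q^*$, one gets $\delta_F(P,Q'')\le 3\delta^*$ in one line.

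Your block-by-block $Q'$ discards the structure of $Q^*$, and the ``three hops'' sketch does not control the straddling edges. Concretely: the straddling edge $[u,v]$ is a single segment, but the sub-arc of $P$ it must cover contains at least the vertex $a_{i+1}$ and, when intermediate blocks are empty, possibly many vertices $a_{i+1},\ldots,a_j$. The identity $\delta_F([x,y],[x',y'])=\max(\|xx'\|,\|yy'\|)$ therefore does not apply. Your three hops name the point $q$ of $Q^*$ matched to $a_{i+1}$, but $q$ lies on $Q^*$, not on $[u,v]$; $Q^*$ may well use entirely different points of $S$ near $a_{i+1}$ (indeed it may revisit points, which your $Q'$ never does), so $\|q-a_{i+1}\|\le\delta^*$ gives no bound on $d(a_{i+1},[u,v])$. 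Nor do you ever construct the monotone reparameterization the \Frechet definition requires along that edge. The perturbation argument avoids all of this precisely because it keeps $Q^*$'s matching and only pays for local vertex displacements.
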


\bibliographystyle{plain}
\bibliography{frechet}

\pagebreak

\section{Appendix: CPSM Approximation Algorithm}
In this appendix, we consider the optimization version of the Continuous Non-Unique All-Points CPSM and detail the approximation algorithm for it that was briefly discussed in Section \ref{sec:approx}.  To do so, we develop an exact algorithm for a restricted version of the problem, which also serves as a 3-approximation to the unrestricted version.

The main combinatorial challenge of the All-Points version of the problem stems from the fact that a point in cylinders of multiple segments can be visited at any one of the segments.  To remove this challenge, we introduce an additional restriction to the problem; we enforce that each point in $S$ be visited at its closest segment.  Visiting points at other segments is also allowed, but each point must be visited at its closest segment even if it is also visited at another one.

In the unrestricted problem, the decision whether or not to visit a given point at a given segment must be made on the basis of whether or not it is possible to visit that point at some other segment.  The additional restriction disconnects this decision from the rest of the problem by making the decision dependent only on whether or not the point is closest to a given segment.  This makes the problem polynomial-time solvable.

\subsection{Preliminaries}
Let $P$ be a polygonal curve composed of $n$ segments in $\Real^d$, denoted by $(P_1, P_2, \dots, P_n)$.  We denote the cylinders $\cyl(P_i, \eps)$ as $C_i$.  For convenience, we define $C_0$ and $C_{n+1}$ to be the $\eps$-balls around the start and end point of $P$.  The set $S_i$ is defined as $S \cap C_i$.  Finally, for any point $s \in S$, we define $P_i[s]$ as the line segment $P_i \cap \ball(s, \eps)$.

Let $Q$ be a polygonal curve whose vertices are in $S$ and whose \Frechet distance from $P$ is at most $\eps$.  For some vertex $s \in S$, $Q$ is said to \emph{visit} a point $s \in S$ at segment $i$ if there exist subcurves $P'$ and $Q'$, each beginning the start of their respective curves, such that $Q'$ ends at $s$, $P'$ ends at some point $p \in P_i$, and $\delta_F(P', Q') \le \eps$.  A point $s \in S_i$ is said to be \emph{reachable} at $i$ if there exists a curve that visits it at $i$, and the pair $(s, p)$ is called a \emph{feasible pair}.

\subsection{Restricted Version Algorithm}
We follow the parametric search paradigm by first developing an algorithm for the decision version of the problem.  An obvious preprocessing step is to confirm that all points of $S$ are a member of some $S_i$.  Another is to confirm that $S_0$ and $S_{n+1}$ are non-empty.  If either of these conditions are false, we can stop and return ``{\sc false}'' immediately.  For a given $s \in S$, let $P^s$ be the segment of $P$ closest to $s$.  Let the \emph{essential points} of $P_i$, denoted by $S^*_i$, be the set $\{s \in S \mid P^s = P_i\}$.  Note that, under the preprocessing assumption, $S^*_i \subset S_i$.

Per our restriction, every point must be visited at its closest cylinder.  However, it may be necessary to visit points in other cylinders as well.  For example, even if $S^*_i = \emptyset$, some point $s \in S_i$ may need to be visited in order to stay close to the given curve and reach future points.  If we think of single points in multiple cylinders as if they were separate points, then there are two types: points we must visit, and points we may skip.  In this way, the problem can be thought of as a variant of the Subset version, in which all points are the latter type.

In order to visit every point in a segment's essential set, care must be taken regarding the first and last points visited for a given segment.  Let $s \in S_i$ be the first point visited in $P_i$, which may not be an essential point of $P_i$.  If there exists an essential point $s'$ for which $\pl(P_i[s]) \succ \pr(P_i[s'])$, then it will not be possible to visit $s'$; the curve has already gone too far and cannot backtrack far enough.  On the other hand, if $t \in S_i$ is the last point visited in $P_i$ and there exists an essential point $t'$ for which $\pl(P_i[t']) \succ \pr(P_i[t])$, then $t'$ must not have been visited, because it is too far ahead to have backtracked from.

To formalize this notion, we say a point $t$ is an \emph{entry point} for $P_i$ if $\pl(P_i[t]) \preceq \pr(P_i[s])$ for all $s \in S^*_i$.  Analogously, we say $t$ is an \emph{exit point} if $\pl(P_i[s]) \preceq \pr(P_i[t])$ for all $s \in S^*_i$.  Note that, if $S^*_i = \emptyset$, then every point in $S_i$ is an entry and exit point for that segment.  In order to ensure that every point in $S^*_i$ can be visited, we must enter each cylinder via an entry point and leave it through an exit point.  As long as this is enforced, we can simply enter each cylinder via an entry point, visit all the essential points in monotonic order along the segment, and exit through an exit point to the next cylinder.

\begin{lemma} \label{thm:entryexit}
Visiting every point at its closest segment is possible if and only if the first (last) point visited in each cylinder is an entry (exit) point for that segment.
\end{lemma}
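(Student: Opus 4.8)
The plan is to prove the two implications separately, necessity by a short monotonicity argument and sufficiency by an explicit rerouting that threads the essential points through a curve one cylinder at a time. For necessity, suppose $Q$ has $\delta_F(P,Q)\le\eps$ and, with respect to a fixed \Frechet matching, visits every point of $S$ at its closest segment; fix a cylinder $C_i$ and let $t$ be the first vertex of $Q$ matched to a point of $P_i$. If $t$ were not an entry point for $P_i$, there would be an essential point $s'\in S^*_i$ with $\pr(P_i[s'])\prec\pl(P_i[t])$, so the window $P_i[s']$ lies entirely to the left of $P_i[t]$ along $P_i$. Since $P_i$ is the closest segment to $s'$, the restriction forces $s'$ to be matched to some $p'\in P_i$, necessarily with $\norm{s'p'}\le\eps$, hence $p'\in P_i[s']$; similarly $t$ is matched to some $p\in P_i[t]$. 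Then $p'\preceq\pr(P_i[s'])\prec\pl(P_i[t])\preceq p$, so monotonicity of the matching places $s'$ strictly before $t$ along $Q$ (they are distinct because their windows are disjoint), contradicting the choice of $t$. Hence $t$ is an entry point, and the mirror-image argument applied to the last vertex matched to $P_i$, together with the definition of exit point, shows that that vertex is an exit point.

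For sufficiency, assume there is a valid curve $Q_0$ whose first and last vertices matched to each segment $i$ are an entry point $e_i$ and an exit point $x_i$; I claim $Q_0$ can be modified to also visit every essential point of every segment at that segment. After normalizing the matching so that $e_i$ and $x_i$ are matched all the way to the left and right ends of their windows, the entry and exit conditions --- $\pl(P_i[e_i])\preceq\pr(P_i[s])$ and $\pl(P_i[s])\preceq\pr(P_i[x_i])$ for every $s\in S^*_i$ --- guarantee that the window $P_i[s]$ meets the part of $P_i$ that $Q_0$ traverses between $e_i$ and $x_i$ inclusive. For an essential point $s$ not yet visited at $P_i$, pick such a point $p^{*}\in P_i[s]$ and let $w$ be the position of $Q_0$ when $P$ is at $p^{*}$ (a vertex of $Q_0$, or a point on one of its edges, which we promote to a vertex). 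Since $\norm{wp^{*}}\le\eps$ and $\norm{sp^{*}}\le\eps$, convexity of $\ball(p^{*},\eps)$ places the entire detour $w\to s\to w$ inside that ball, so it can be spliced into $Q_0$ at the instant $P$ is held at $p^{*}$ without increasing the \Frechet distance; and because $p^{*}$ lies between the matched positions of $e_i$ and $x_i$, the detour neither precedes $e_i$ nor follows $x_i$, so these remain the first and last vertices matched to $P_i$. Performing this insertion for every essential point of every cylinder yields a valid curve visiting each point at its closest segment, which realizes the ``enter through the entry point, sweep the essential points in order, leave through the exit point'' traversal described before the lemma.

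The step I expect to be the main obstacle is making the sufficiency argument fully rigorous as a reparametrization: one must check that holding $P$ fixed while $Q$ performs a detour is legal (the reparametrization of $P$ is locally constant there), that several detours attached at or near the same point of $P_i$ can be linearly ordered so that monotonicity survives, and that the ``part traversed between $e_i$ and $x_i$'' really is covered after the normalization so that a suitable $w$ always exists. A secondary technicality concerns the bookkeeping around the shared endpoint $\pl(P_i)=\pr(P_{i-1})$ of consecutive segments --- a vertex matched to it should be counted as matched to both --- and the fact that the problem's notion of ``visit at segment $i$'' is defined through prefixes rather than a single global matching; one should fix at the outset the matching (and the endpoint convention) with respect to which ``first/last vertex matched to $P_i$'' and the definitions of entry and exit point are read, and check that these are consistent.
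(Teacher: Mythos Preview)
The paper does not supply a formal proof of this lemma; it simply states the lemma after the informal paragraph that precedes it. That paragraph contains exactly the monotonicity reasoning you use for necessity (``the curve has already gone too far and cannot backtrack far enough'') and, for sufficiency, the one-line prescription ``enter each cylinder via an entry point, visit all the essential points in monotonic order along the segment, and exit through an exit point.'' Your necessity direction is a faithful formalization of the paper's argument and is correct as written.

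For sufficiency you diverge from the paper: rather than discarding the portion of $Q_0$ between $e_i$ and $x_i$ and replacing it by a monotone sweep of $S^*_i$, you keep $Q_0$ intact and splice in detours $w\to s\to w$. The detour idea is sound for controlling the \Frechet distance, but there is a genuine technical snag: the point $w$ you ``promote to a vertex'' is in general an interior point of an edge of $Q_0$, hence need not lie in $S$, and the problem requires every vertex of $Q$ to be a point of $S$. Simply inserting $s$ between the neighbouring $S$-vertices $a,b$ (dropping $w$) does not obviously keep $\delta_F\le\eps$, because $\overline{as}$ is not the sub-edge $\overline{aw}$ that the original matching controlled. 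The paper's monotone-sweep construction avoids this entirely, since all vertices of the rebuilt sub-curve are points of $S$ by design; note that this is precisely the ``enter, sweep in order, leave'' traversal you invoke in your last sentence. If you want to keep the detour viewpoint, the clean fix is to restrict to the case where $w$ is already a vertex of $Q_0$ by first refining $Q_0$ only at existing $S$-vertices, or to switch to the sweep construction and argue the greedy selection of match points in $P_i$ directly from the entry/exit inequalities. The other subtleties you flag (ordering multiple detours, endpoint conventions, prefix-based definition of ``visit'') are real but routine once this is handled.
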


To turn this lemma into an algorithm, we adapt the algorithm for the Subset version of the problem given in \cite{Maheshwari11}.  We provide a small review here.  The first step of the Subset algorithm is to precompute a reachability function $r_i(s, t)$.  Let $s \in S_i$ be a point that is reachable at $P_i$ by some feasible curve $Q$ ending in $s$.  Given a point $t \in S$, $r_i(s, t)$ is defined as the largest index $j \ge i$ such that the curve $Q + \overline{st}$ visits $t$ at $P_j$, or 0 if $Q + \overline{st}$ is not feasible.  As proven in \cite{Maheshwari11}, $t$ is reachable at $P_j$ for all $i \le j \le r_i(s, t)$. Therefore, this value provides reachability information for all pairs of points in $S$ from any segment to any other.  In order to ensure that no essential points be skipped, we must modify $r_i(s, t)$ to obtain a new function $r'_i(s, t)$ with the following properties:
\begin{itemize}
    \item $r'_i(s, t)$ must be either 0 or $i$ if $s$ is not an exit point for $P_i$.
    \item If $r'_i(s, t) > i$, then $t$ must be an entry point for $P_{r'_i(s, t)}$.
    \item All cylinders $j$ for $i < j < r'_i(s, t)$ must have empty essential sets.
\end{itemize}

Recall that every point in a cylinder with an empty essential set is an entry point.  Therefore, the previously stated property of $t$ being reachable at $C_j$ if $t \in C_j$ for $i \le j \le r'_i(s,t)$ still holds.

Under this modified reachability function, the Subset algorithm decides our restricted problem.  Note that, even though the actual curve returned by the Subset algorithm is not guaranteed to visit all points, it will return a curve that respects the modified reachability function.  By Lemma \ref{thm:entryexit}, this is sufficient to guarantee the existence of a valid all-points curve.

\textbf{Time Complexity.} Lines \ref{algo:closest} and \ref{algo:preproc} of Algorithm \ref{algo:restricted} takes $O(nk)$ time.
Lines \ref{algo:precomp} and \ref{algo:subset} take $O(nk^2)$ time \cite{Alt03,Maheshwari11}.
Computing the entry and exit sets on Line \ref{algo:entryexit} requires comparing $O(k)$ candidates with $O(k)$ other points, repeated for each of the $n$ cylinders, so this step takes $O(nk^2)$ time.
To compute $r'_i(s, t)$ in Line \ref{algo:modify}, we define a value $e_i$ as the first segment after $i$ with a non-empty essential set, which can be computed in $O(n)$ time.  Then, given $r_i(s, t)$, we set $r'_i(s, t)$ to be the smaller of $e_i$ and $r_i(s, t)$.  Since correcting a single entry takes $O(1)$ time, it takes $O(nk^2)$ to correct the entire function.
Thus, the complexity of the algorithm is $O(nk^2)$.

\begin{theorem}
Algorithm 1 correctly decides the restricted version of the Continuous Non-unique Subset CPSM in $O(nk^2)$ time.
\end{theorem}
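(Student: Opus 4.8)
The plan is to prove correctness and the running-time bound separately, with correctness resting on Lemma~\ref{thm:entryexit} together with the reachability machinery of \cite{Maheshwari11}. First I would verify that the modified function $r'_i(s,t)$ satisfies its three stated properties and still retains the monotonicity guarantee of the original: if $s$ is reachable at $P_i$ and $t \in C_j$ for $i \le j \le r'_i(s,t)$, then $t$ is reachable at $P_j$. The first two properties hold by construction — we zero out or cap entries whose source $s$ is not an exit point for $P_i$ and whose target $t$ is not an entry point at the destination segment — while the third follows from capping $r'_i(s,t)$ by $e_i$, the first segment after $i$ with a non-empty essential set. Monotonicity is inherited from $r_i$ because $r'_i \le r_i$ everywhere and \cite{Maheshwari11} guarantees reachability over the entire interval $[i, r_i(s,t)]$.

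Next I would prove both directions of the equivalence ``Algorithm~\ref{algo:restricted} returns {\sc yes} $\iff$ a valid restricted all-points curve exists.'' For the ``if'' direction, suppose the Subset algorithm run with $r'_i$ succeeds on Line~\ref{algo:subset}; it returns a curve $Q$ whose consecutive vertices $s,t$ satisfy that if $s$ is visited at $P_i$ then $t$ is visited at some $P_j$ with $i \le j \le r'_i(s,t)$. By the first property each cylinder is exited through an exit point, by the second each cylinder is entered through an entry point, and by the third no cylinder with a non-empty essential set is skipped between consecutively visited segments. Lemma~\ref{thm:entryexit} then lets us splice, inside each visited cylinder, a detour that visits every essential point in monotone order along the segment (cylinders with empty essential sets need none); the resulting curve has all vertices in $S$, visits every point at its closest segment, and still has \Frechet distance at most $\eps$. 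For the ``only if'' direction, from a valid all-points curve extract the subsequence of vertices that are each the \emph{first} point visited in the segment where they are visited; this is a feasible Subset curve, its consecutive pairs respect $r_i$, and — because Lemma~\ref{thm:entryexit} pins the first/last visited point of each cylinder to be an entry/exit point and the all-points requirement forbids skipping any non-empty essential set — it in fact respects $r'_i$, so the Subset algorithm succeeds. The preprocessing on Lines~\ref{algo:closest}--\ref{algo:preproc} correctly rejects the degenerate inputs where some point lies in no $S_i$ or $S_0$ or $S_{n+1}$ is empty.

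For the running time I would tally the lines as in the preceding discussion: Lines~\ref{algo:closest}--\ref{algo:preproc} cost $O(nk)$; precomputing $r_i(s,t)$ and running the Subset algorithm (Lines~\ref{algo:precomp} and \ref{algo:subset}) cost $O(nk^2)$ by \cite{Alt03,Maheshwari11}; computing the entry and exit sets (Line~\ref{algo:entryexit}) costs $O(nk^2)$ since for each of the $n$ cylinders we compare $O(k)$ candidate points against the $O(k)$ essential points; and obtaining $r'_i$ from $r_i$ (Line~\ref{algo:modify}) costs $O(nk^2)$ after an $O(n)$ precomputation of the $e_i$ values, since each of the $O(nk^2)$ entries is corrected in $O(1)$. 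Summing over all lines gives $O(nk^2)$.

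The step I expect to be the main obstacle is the ``only if'' direction of correctness: showing that the Subset curve extracted from an arbitrary all-points solution respects the restrictive $r'_i$ and not merely the original $r_i$. This is precisely where Lemma~\ref{thm:entryexit} (which enforces the entry/exit structure) must be combined with the observation that an all-points curve can never bypass a cylinder with a non-empty essential set, so the cap by $e_i$ is never actually binding against a genuine solution.
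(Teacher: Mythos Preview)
Your proposal is correct and follows essentially the same approach as the paper: correctness via Lemma~\ref{thm:entryexit} combined with the modified reachability function $r'_i$ fed into the Subset algorithm of \cite{Maheshwari11}, and the identical line-by-line time-complexity tally. You are in fact more explicit than the paper on the direction ``a valid restricted curve exists $\Rightarrow$ the Subset algorithm with $r'_i$ succeeds,'' which the paper leaves largely implicit, so your flagging this as the main obstacle is apt but does not represent a departure from the paper's route.
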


With an algorithm for the decision version in hand, the technique of parametric search is employed to find the optimal curve.  By analyzing the so-called \emph{free space diagram} of $P$ and each of the $S \times S$ possible segments of $Q$, $O(nk^2)$ critical values of $\eps$ can be identified.  These values can then be sorted, and the decision version of the algorithm can be used to binary search for the smallest value.  This technique yields an algorithm with running time $O(nk^2 \log(nk))$.

\subsection{Approximation Proof}
We now show that the optimal curve for the unrestricted problem can be transformed into a curve that obeys the extra restriction while only increasing its \Frechet distance by a factor of 3.  This will show that the restricted version algorithm is a 3-factor approximation algorithm for the unrestricted version.

For a given $P$ and $S$, let $Q$ be the optimal curve for the unrestricted problem, with \Frechet distance $\eps$ from $P$.  Some points of $S$ may not be visited at their closest segment; let $s$ be such a point.  It must be true that $s$ is within $\eps$ of $P^s$, or a curve with \Frechet distance $\eps$ would not be possible.  Recall that since $Q$ is within \Frechet distance $\eps$ of $P$, there is always at least one feasible pair for any point on $P$.  Let $(s', p)$ be a feasible pair such that $s \in \ball(p, \eps)$.  Then, add $s'$ as a new vertex of $Q$.  Note that the distance between $s$ and $s'$ is at most $2\eps$.  Repeating this process for every point not visited at its closest segment yields a new curve $Q'$.  Since each new vertex has been added along an existing segment, $\delta_F(P, Q) = \delta_F(P, Q')$.

Now, merge each $s'$ with its corresponding $s$ by translating the former to the position of the latter, yielding a new curve $Q''$ with a potentially different \Frechet distance from $P$.  Let $\sigma$ and $\tau$ be reparameterizations of $P$ and $Q'$, and consider the point $Q'(\tau(t))$ for some $t \in [0, 1]$, which lies on some segment of $Q'$.  The endpoints of the corresponding segment in $Q''$ may have been perturbed up to $2\eps$, and thus the point $Q''(\tau(t))$ may be up to $2\eps$ away from $Q'(\tau(t))$.  Therefore, $\norm{P(\sigma(t)), Q''(\tau(t))}$ can be at most $2\eps$ larger than $\norm{P(\sigma(t)), Q'(\tau(t))}$.  Finally, since the \Frechet distance is the infimum of the maximum distance over all reparameterizations, we have that $\delta_F(P, Q'') \le \delta_F(P, Q') + 2\eps = 3\eps$.  This implies that the \Frechet distance between $P$ and the optimal restricted path is at most 3 times that of the unrestricted path.

\begin{theorem}
Given a polygonal curve $P$ and a point set $S$, a polygonal curve $Q$ whose vertices are exactly $S$ with $\delta_F(P, Q)$ at most 3 times that of the optimal can be computed in $O(nk^2 \log(nk))$ time.
\end{theorem}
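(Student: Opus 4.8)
The plan is to assemble the theorem from three pieces that are already in hand: the exact decision algorithm for the restricted All-Points problem (Algorithm~\ref{algo:restricted}, whose correctness rests on Lemma~\ref{thm:entryexit}), a parametric-search reduction from the optimization version to the decision version, and the approximation bound relating the restricted optimum to the unrestricted optimum. First I would invoke the fact that Algorithm~\ref{algo:restricted} decides the restricted decision problem in $O(nk^2)$ time: after the $O(nk)$ preprocessing and entry/exit computation, the modified reachability function $r'_i(s,t)$ is obtained from $r_i(s,t)$ in $O(nk^2)$ time, and running the Subset algorithm of \cite{Maheshwari11} on $r'$ correctly reports whether a curve respecting $r'$ exists; by Lemma~\ref{thm:entryexit} such a curve exists precisely when a curve visiting every point of $S$ at its closest segment exists.

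Next I would upgrade this decision procedure to the optimization version by parametric search. The candidate edges of $Q$ range over the $O(k^2)$ ordered pairs of points of $S$, and analyzing the free-space diagram of $P$ against each candidate edge isolates $O(nk^2)$ critical values of $\eps$. Sorting these values costs $O(nk^2\log(nk))$, and a binary search over them that calls the $O(nk^2)$-time decision algorithm $O(\log(nk))$ times costs no more; the sort dominates, yielding the claimed $O(nk^2\log(nk))$ bound. To deliver a curve whose vertex set is \emph{exactly} $S$, not merely a subset, I would post-process the curve returned by the Subset algorithm: since it enters each cylinder $C_i$ through an entry point and exits through an exit point, Lemma~\ref{thm:entryexit} guarantees that the essential points $S^*_i$ can be spliced into the portion of the curve lying in $C_i$, visited in monotone order along $P_i$, without changing the \Frechet distance; because the preprocessing step ensures $S=\bigcup_i S^*_i$, the resulting curve visits all of $S$.

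Finally, the approximation factor is supplied by the transformation argument given above: from an optimal unrestricted curve of \Frechet distance $\eps^*$, inserting for each point $s$ not visited at its closest segment a feasible vertex $s'$ along an existing edge of $Q$ preserves the \Frechet distance, and then translating each $s'$ (a displacement of at most $2\eps^*$) to the position of $s$ raises the distance by at most $2\eps^*$; hence some restricted all-points curve has \Frechet distance at most $3\eps^*$, and the parametric search, which returns an optimal restricted curve, returns one of distance at most $3\eps^*$. The main obstacle I anticipate is not any single numerical estimate but the reconciliation of the ``Subset'' behaviour of the underlying algorithm with the ``All-Points'' statement of the theorem: one must check carefully that the spliced-in essential points can always be inserted in a \Frechet-distance-preserving way (exactly the content of the entry/exit definitions and Lemma~\ref{thm:entryexit}) and that handling every $S^*_i$ leaves no point of $S$ unvisited.
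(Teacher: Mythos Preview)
Your proposal is correct and follows essentially the same route as the paper: decide the restricted problem in $O(nk^2)$ time via Algorithm~\ref{algo:restricted} and Lemma~\ref{thm:entryexit}, lift to the optimization version by parametric search over the $O(nk^2)$ critical values from the free-space diagram, and obtain the factor~3 by the insert-then-translate argument on the unrestricted optimum. Your explicit post-processing step (splicing the $S^*_i$ into the Subset output in monotone order) is exactly what the paper leaves implicit when it says the returned curve ``respects the modified reachability function'' and invokes Lemma~\ref{thm:entryexit}.
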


As Figure \ref{fig:tightness} shows, the approximation bound is realizable.  However, if the algorithm yields a \Frechet distance for which no point in $S$ belongs to more than one cylinder, then this solution must also be optimal for the unrestricted version.
 
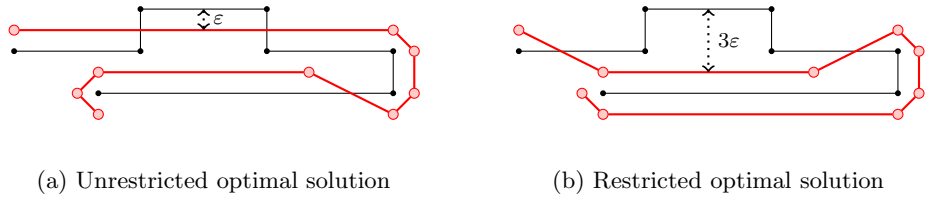
\begin{figure}[h]
    \begin{subfigure}[t]{0.45\textwidth}
        \tikzsetnextfilename{tightness2}
\begin{tikzpicture}[scale=.28]
    \tikzdefines
    \renewcommand{\ceps}{1cm}

    \node[cblack](p1) at (2,0) {};
    \node[cblack](p2) at (8,0) {};
    \node[cblack](p3) at (8,2) {};
    \node[cblack](p4) at (14,2) {};
    \node[cblack](p5) at (14,0) {};
    \node[cblack](p6) at (20,0) {};
    \node[cblack](p7) at (20,-2) {};
    \node[cblack](p8) at (6,-2) {};

    \node[cred](s1) at (2,1) {};
    \node[cred](s2) at (6,-1) {};
    \node[cred](s3) at (16,-1) {};
    \node[cred](s4) at (20,1) {};
    \node[cred](s5) at (21,0) {};
    \node[cred](s6) at (21,-2) {};
    \node[cred](s7) at (20,-3) {};
    \node[cred](s8) at (5,-2) {};
    \node[cred](s9) at (6,-3) {};

    \draw (p1)--(p2)--(p3)--(p4)--(p5)--(p6)--(p7)--(p8);

    \draw[red,thick] (s1)--(s4)--(s5)--(s6)--(s7)--(s3)--(s2)--(s8)--(s9);
    %\draw[red,thick] (s1)--(s4)--(s5)--(s3)--(s2)--(s6);

    %\path[draw,black,thick,dotted,<->] ($(p3)!.5!(p4)$) -- node[auto] {{\fontsize{8}{10}$3\varepsilon$} } ($(s2)!.5!(s3)$);

    \path[draw,black,thick,dotted,<->] ($(p3)!.5!(p4)$) -- node[auto] {{\fontsize{8}{10}$\varepsilon$} } ($(s1)!.5!(s4)$);
\end{tikzpicture}
        %\bigskip
        \caption{Unrestricted optimal solution}
        \label{fig:tightness1}
    \end{subfigure}
    \hfill
    \begin{subfigure}[t]{0.45\textwidth}
        \tikzsetnextfilename{tightness}
\begin{tikzpicture}[scale=.28]
    \tikzdefines
    \renewcommand{\ceps}{1cm}

    \node[cblack](p1) at (2,0) {};
    \node[cblack](p2) at (8,0) {};
    \node[cblack](p3) at (8,2) {};
    \node[cblack](p4) at (14,2) {};
    \node[cblack](p5) at (14,0) {};
    \node[cblack](p6) at (20,0) {};
    \node[cblack](p7) at (20,-2) {};
    \node[cblack](p8) at (6,-2) {};

    \node[cred](s1) at (2,1) {};
    \node[cred](s2) at (6,-1) {};
    \node[cred](s3) at (16,-1) {};
    \node[cred](s4) at (20,1) {};
    \node[cred](s5) at (21,0) {};
    \node[cred](s6) at (21,-2) {};
    \node[cred](s7) at (20,-3) {};
    \node[cred](s8) at (5,-2) {};
    \node[cred](s9) at (6,-3) {};
    
    \draw (p1)--(p2)--(p3)--(p4)--(p5)--(p6)--(p7)--(p8);

    \draw[red,thick] (s1)--(s2)--(s3)--(s4)--(s5)--(s6)--(s7)--(s9)--(s8);
    %\draw[red,thick] (s1)--(s4)--(s5)--(s3)--(s2)--(s6);

    \path[draw,black,thick,dotted,<->] ($(p3)!.5!(p4)$) -- node[auto] {{\fontsize{8}{10}$3\varepsilon$} } ($(s2)!.5!(s3)$);

    %\path[draw,black,thick,dotted,<->] ($(p3)!.5!(p4)$) -- node[auto] {{\fontsize{8}{10}$\varepsilon$} } ($(s1)!.5!(s4)$);
\end{tikzpicture}
        \caption{Restricted optimal solution}
        \label{fig:tightness2}
    \end{subfigure}
     \caption{If the two middle points are slightly closer to the top segments than the bottom segment, the optimal solution for the restricted version has \Frechet distance 3 times that of the unrestricted version.}
     \label{fig:tightness}
\end{figure}

\end{document}